\definecolor{OliveGreen}{HTML}{3C8031}
\newenvironment{fminipage}%
\newtheorem{theorem}{Theorem}[section]
\newtheorem{proposition}[theorem]{Proposition}
\newtheorem*{proposition*}{Proposition}
\newtheorem{lemma}[theorem]{Lemma}
\newtheorem{observation}[theorem]{Observation}
\newtheorem{fact}[theorem]{Fact}
\newtheorem{question}{Question}
\theoremstyle{definition}
\newtheorem{definition}{Definition}[section]
\newenvironment{breakablealgorithm}
  {% \begin{breakablealgorithm}
   \begin{center}
     \refstepcounter{algorithm}% New algorithm
     \hrule height.8pt depth0pt \kern2pt% \@fs@pre for \@fs@ruled
     \renewcommand{\caption}[2][\relax]{% Make a new \caption
       {\raggedright\textbf{\ALG@name~\thealgorithm} ##2\par}%
       \ifx\relax##1\relax % #1 is \relax
         \addcontentsline{loa}{algorithm}{\protect\numberline{\thealgorithm}##2}%
       \else % #1 is not \relax
         \addcontentsline{loa}{algorithm}{\protect\numberline{\thealgorithm}##1}%
       \fi
       \kern2pt\hrule\kern2pt
     }
  }{% \end{breakablealgorithm}
     \kern2pt\hrule\relax% \@fs@post for \@fs@ruled
   \end{center}
  }
\newlength{\widebarargwidth}
\newlength{\widebarargheight}
\newlength{\widebarargdepth}
\long\def\@makecaption#1#2{
       \vskip 0.8ex
       \setbox\@tempboxa\hbox{\small {\bf #1:} #2}
       \parindent 1.5em 
       \dimen0=\hsize
       \advance\dimen0 by -3em
       \ifdim \wd\@tempboxa >\dimen0
               \hbox to \hsize{
                       \parindent 0em
                       \hfil 
                       \parbox{\dimen0}{\def\baselinestretch{0.96}\small
                               {\bf #1.} #2
                               %%\unhbox\@tempboxa
                               } 
                       \hfil}
       \else \hbox to \hsize{\hfil \box\@tempboxa \hfil}
       \fi
       }
\long\def\comment#1{}
\newcommand{\bbone}[1]{\mathbbm{1}\{#1\}}
\newcommand{\set}[1]{\{#1\}}
\newcommand{\defeq}{\coloneqq}
\newcommand{\N}{\mathbb{N}}
\newcommand{\E}{\mathbb{E}}
\newcommand{\poly}{\mathsf{poly}}
\newcommand{\vend}{\mathsf{vEnd}}
\newcommand{\vstart}{\mathsf{vStart}}
\newcommand{\Pivot}{\mathsf{Pivot}}
\newcommand{\length}{\mathsf{length}}
\renewcommand{\epsilon}{\varepsilon}
\newcommand{\eps}{\epsilon}
\newcommand{\dom}{\mathsf{dom}}
\newcommand{\0}{\varnothing}
\newcommand{\bemph}[1]{{\normalfont#1}} % define how emphasised brackets should look
\newcommand{\ep}[1]{\bemph{(}#1\bemph{)}} % parentheses
\newcommand{\emphdef}[1]{\textbf{\textit{{#1}}}}
\newcommand{\emphd}[1]{\emphdef{#1}}
\newcommand{\blank}{\mathsf{blank}}
\title{A Simple Algorithm for Near-Vizing Edge-Coloring in\\Near-Linear Time}
\author[1]{Abhishek Dhawan\thanks{Email: abhishek.dhawan@math.gatech.edu}}
\affil[1]{School of Mathematics, Georgia Institute of Technology}
\date{}
\begin{document}

\maketitle

\begin{abstract}
    We present a simple $(1+\eps)\Delta$-edge-coloring algorithm for graphs of maximum degree $\Delta = \Omega(\log n / \eps)$ with running time $O\left(m\,\log^3 n/\eps^3\right)$. Our algorithm improves upon that of [Duan, He, and Zhang; SODA19], which was the first near-linear time algorithm for this problem. While our results are weaker than the current state-of-the-art, our approach is significantly simpler, both in terms of analysis as well as implementation, and may be of practical interest.
\end{abstract}

\vspace{10pt}
\tableofcontents
\newpage

\section{Introduction}\label{section:intro}

\subsection{Background and Results}\label{subsec:background}

All graphs considered in this paper are finite, undirected, and simple.
Let $G = (V, E)$ be a graph satisfying $|V| = n$, $|E| = m$, and $\Delta(G) = \Delta$, where $\Delta(G)$ is the maximum degree of a vertex in $G$.
For $q \in \N$, a proper $q$-edge-coloring of $G$ is a function $\phi\,:\, E \to [q]$ (where $[q] = \set{1, \ldots, q}$) such that $\phi(e) \neq \phi(f)$ whenever $e$ and $f$ share an endpoint.
The chromatic index of $G$, denoted $\chi'(G)$, is the minimum value $q$ for which $G$ admits a proper $q$-edge-coloring.
Edge-coloring is one of the most fundamental and well-studied problems in graph theory, having both great theoretical significance and a wide array of applications.
% including in scheduling problems and frequency assignment for fiber optic networks.
Virtually every graph theory textbook contains a chapter on edge-coloring, see, e.g., \cites[\S17]{BondyMurty}[\S5.3]{Diestel}.

It is easy to see that $\chi'(G) \geq \Delta$ as each edge incident to a vertex of maximum degree must receive a different color.
Vizing's celebrated result shows that the chromatic index is at most one away from this trivial lower bound, i.e., $\chi'(G) \leq \Delta + 1$ \cite{Vizing}.
(See \cite[\S{}A.1]{EdgeColoringMonograph} for an English translation of the paper and \cites[\S17.2]{BondyMurty}[\S5.3]{Diestel} for modern presentations.)
In general, it is NP-hard to determine $\chi'(G)$ (even for $\Delta = 3$) \cite{Holyer}.
Therefore, it is natural to design algorithms for $q$-edge-coloring where $q \geq \Delta + 1$.

% \paragraph{Vizing Edge-Coloring.}
When $q = \Delta + 1$, we refer to such colorings as \emphd{Vizing edge-colorings}.
The original proof of Vizing's theorem is constructive and yields an $O(mn)$-time algorithm, simplifications of which appeared in \cite{Bollobas, RD, MG}.
Gabow, Nishizeki, Kariv, Leven, and Terada designed two recursive algorithms, which run in $O(m\sqrt{n\log n})$ and $O(\Delta\,m\log n)$ time, respectively \cite{GNKLT}.
There was no improvement for 34 years until Sinnamon described an algorithm with running time $O(m\sqrt{n})$ \cite{Sinnamon}, and more recently Bhattacharya, Carmon, Costa, Solomon, and Zhang designed an $\tilde O(mn^{1/3})$-time\footnote{Here, and in what follows, $\tilde O(x) = O(x\,\poly(\log x))$} randomized algorithm \cite{bhattacharya2024faster}.

While the above summarizes the history of the state-of-the-art for Vizing edge-coloring of general graphs, there has been additional progress for restricted classes of graphs.
Notably, Cole, Ost, and Schirra designed an $O(m\log \Delta)$-time algorithm for $\Delta$-edge-coloring bipartite graphs \cite{cole2001edge};
Bernshteyn and the author considered graphs of bounded maximum degree, describing a linear-time algorithm for Vizing edge-coloring in this setting, i.e., $O(m)$ for $\Delta = O(1)$ \cite{fastEdgeColoring};
Bhattacharya, Carmon, Costa, Solomon, and Zhang designed an $\tilde O(m)$-time algorithm for graphs of bounded arboricity \cite{bhattacharya2023density}.

% \paragraph{Near-Vizing Edge-Coloring.}
As exhibited by the results mentioned in the previous paragraph, we may obtain faster algorithms by considering restricted classes of graphs.
Another option is to use larger palettes, i.e., more colors.
A \emphd{near-Vizing edge-coloring} is a proper edge-coloring with palettes of size $(1+\eps)\Delta$ for $\eps > 0$.
This is the main focus of this paper.
% We consider palettes of size $(1+\eps)\Delta$ for $\eps \in (0, 1)$.
% Such colorings are referred to as \emphd{near-Vizing edge-colorings}.
We summarize the history of near-Vizing edge-coloring in Table~\ref{table:history}.

% \vspace{0.1in}
\begin{table}[htb!]
    \centering
    \begin{tabular}{| c | c | c | >{\centering\arraybackslash\scriptsize}m{0.25\textwidth}|}
        \hline
        \thead{\textbf{\#{} Colors}} & \thead{\textbf{Runtime}} & \thead{\textbf{Restrictions on $\Delta,\,n,\,\eps$}} & \thead{\textbf{References}} \\\hline\hline
        $(1+\epsilon)\Delta$ & $O\left(m\,\log^6 n/\epsilon^2\right)^\star$ & $\Delta = \Omega\left(\log n / \eps\right)$ & {Duan--He--Zhang \cite{duan2019dynamic}} \\[2pt]\hline
         $(1+\epsilon)\Delta$ & $O\left(m\,\log n/\epsilon\right)^\star$ & No restrictions & {Elkin--Khuzman \cite{elkin2024deterministic}} \\[2pt]\hline
         $(1+\epsilon)\Delta$ & $O\left(\max \{m/\epsilon^{18}, \ m \,\log\Delta\}\right)^\star$ & $\eps \geq \max\left\{\frac{\log \log n}{\log n}, \frac{1}{\Delta}\right\}$ & {Elkin--Khuzman \cite{elkin2024deterministic}} \\[4pt]\hline
         $(1+\epsilon)\Delta$ & $O\left(m \,\log(1/\epsilon)/\epsilon^2\right)^\star$ &  $\Delta \geq \left(\log n / \epsilon\right)^{\poly\left(1/\epsilon\right)}$ & {Bhattacharya--Costa--Panski--Solomon \cite{bhattacharya2024nibbling}} \\[2pt]\hline
         $(1+\epsilon)\Delta$ & $O\left(m \,\log(1/\epsilon)\right)^\star$ in expectation &  $\Delta = \Omega\left(\log n / \eps\right)$ & {Assadi \cite{assadi2024faster}} \\[2pt]\hline
         $(1+\epsilon)\Delta$ & $O\left(m\log (1/\eps)/\eps^4\right)^\star$ & No restrictions & {Bernshteyn--AD \cite{bernshteyn2024linear}} \\[2pt]\hline
    \end{tabular}
    
    \vspace{7pt}
    \caption{A brief survey of near-Vizing edge-coloring algorithms ($\star$ indicates randomized). The stated runtime of randomized algorithms is attained with high probability, unless explicitly indicated otherwise.}\label{table:history}
\end{table}

In our work, we design a randomized algorithm for dense graphs, i.e., $\Delta = \Omega(\log n / \eps)$.
While a number of results in Table~\ref{table:history} outperform our algorithm, we remark that our approach is significantly simpler, both in terms of analysis as well as implementation, and may be of practical interest.
We discuss this further below.
Let us first state our result.

\begin{theorem}\label{theo:main_theo}
    Let $\epsilon \in (0,1)$ be arbitrary, and let 
    % $\xi > 0$ be a sufficiently large constant.
    % Let
    $G$ be an $n$-vertex graph with $m$ edges having maximum degree $\Delta \geq 500\log n / \epsilon$.
    There is a randomized sequential algorithm that finds a proper $(1+\epsilon)\Delta$-edge-coloring of $G$ in time $O\left(m\log^3n/\epsilon^3\right)$ with probability at least $1 - 1/\poly(n)$.
\end{theorem}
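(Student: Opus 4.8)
\medskip
\noindent\textbf{Proof strategy.}\; The plan is to construct the coloring one edge at a time. Maintain a partial proper $(1+\eps)\Delta$-edge-coloring $\phi$, initially empty; repeatedly pick an uncolored edge and run a randomized subroutine that extends $\phi$ so that this edge, and every previously colored edge, is colored, in amortized time $O(\log^3 n/\eps^3)$. Summing over the $m$ edges and taking a union bound over the (at most $\poly(n)$ many) ``bad'' events in the analysis then gives the theorem. The single elementary fact used throughout is that, since the palette has $(1+\eps)\Delta$ colors while every vertex has degree at most $\Delta$, in every partial proper coloring each vertex $w$ \emph{misses} at least $\eps\Delta$ colors; writing $M_\phi(w)$ for this set, we have $|M_\phi(w)|\ge\eps\Delta\ge 500\log n$ by hypothesis. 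This lower bound is exactly the amount of independent per-vertex randomness needed to push the concentration estimates below past a union bound over the up-to-$n^2$ edges and vertices touched during the run, and it is where the constant $500$ and the assumption $\Delta\ge 500\log n/\eps$ enter.

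The subroutine follows the Vizing-chain paradigm in its ``multi-step'' form, designed so that every object stays short. To color an uncolored edge $uv$: fix $\alpha\in M_\phi(u)$ and grow a Vizing fan at $u$ rooted at $uv$, i.e.\ a maximal sequence of distinct neighbours $v=v_0,v_1,\dots,v_k$ of $u$ with $\phi(uv_{i+1})\in M_\phi(v_i)$. The slack forces fans to be short: the sets $M_\phi(v_0),\dots,M_\phi(v_k)$ each have size at least $\eps\Delta$ inside a palette of size $(1+\eps)\Delta$, so once $k+1$ exceeds $(1+\eps)/\eps$ two of them must intersect and the fan can be closed; hence $k=O(1/\eps)$. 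Rotating the fan reduces the task to flipping an $\alpha\beta$-alternating path---the \emph{chain}---emanating from the fan's last vertex, where $\beta$ is some colour missing there. Rather than flip a possibly long chain, we \emph{truncate} it at length $\ell=\Theta(\log n/\eps)$, flip the truncation, and---if $uv$ is still uncolored---observe that we are left with a new, structurally identical uncolored-edge instance one ``step'' further along, on which we repeat using fresh randomness (which colour $\beta$ to use, and from which edge along an over-long chain to continue). The full augmenting object is thus a bounded-length fan, a truncated path, another bounded-length fan, another truncated path, and so on. Every step is one of the two standard local moves (fan rotation, path flip) and so preserves properness; and once the chain ``closes''---which it must, by the classical Vizing argument, once it can no longer be extended---the edge $uv$ acquires a colour.

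Correctness being immediate from the above, the substance is the running-time bound, which rests on the quantitative claim that, \emph{with the random choices made inside the steps, the multi-step chain closes after $O(\log n/\eps)$ steps with probability $1-1/\poly(n)$}. The intended proof of this is a potential-function (compressibility) argument in the spirit of Bernshteyn--Dhawan and Bhattacharya et al.: assign to the current instance a potential that is at most $\poly(n)$, and show that the $\ge\eps\Delta$ choices available at a step shrink the potential by a factor bounded away from $1$ (in expectation, or with constant probability), after which a Chernoff bound yields the $O(\log n/\eps)$-step bound and a union bound over all edges and all steps yields the $1/\poly(n)$ failure probability; this is the place where one genuinely needs slack $\Omega(\log n/\eps)$ rather than merely $\Omega(1/\eps)$. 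Granting this, each step---growing and flipping a short fan and a truncated chain of length $O(\log n/\eps)$, together with the colour searches and data-structure updates it triggers---can be implemented in $O(\log^2 n/\eps^2)$ time using per-vertex arrays indexed by colour that support $O(1)$-time uniform sampling from $M_\phi(\cdot)$ and $O(1)$-time updates under recolorings; multiplying by the $O(\log n/\eps)$ steps gives the claimed $O(\log^3 n/\eps^3)$ amortized time, with no attempt made to optimize the $\log$ and $\eps^{-1}$ exponents, in keeping with the paper's emphasis on simplicity. The main obstacle is exactly the potential-function analysis of the multi-step chain: successive steps are \emph{not} independent, since each step's starting instance is the previous step's truncated residue, so the potential must be engineered so that the randomness injected at a step provably improves it regardless of the history; controlling this dependence, together with the union bound it feeds, is the crux, while the short-fan fact, the local moves, and the data structures are routine.
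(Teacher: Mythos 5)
Your proposal is a genuinely different route from the paper's, and as written it has a real gap in exactly the place you flag as "the crux."

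You set up a \emph{multi-step} Vizing chain (fan, truncated path, fan, truncated path, \dots) and appeal to a potential-function / compressibility argument ``in the spirit of Bernshteyn--Dhawan and Bhattacharya et al.'' to show the chain closes in $O(\log n/\eps)$ steps with high probability. But this lemma is asserted, not proven, and it is precisely the heavy machinery this paper is designed to avoid: the multi-step analysis has to cope with the fact that successive truncated residues are not independent, which is why the cited papers need their intricate entropy-compression or potential arguments in the first place. Without a concrete potential and a concrete per-step decrease bound, the running-time claim does not follow. A smaller but genuine error is the deterministic fan-length bound: from $|M_\phi(v_i)|\ge\eps\Delta$ in a palette of size $(1+\eps)\Delta$ you correctly get that two of $M_\phi(v_0),\dots,M_\phi(v_k)$ intersect once $k+1>(1+\eps)/\eps$, but an intersection of two \emph{leaf} missing sets does not by itself close a Vizing fan (closure requires a color shared with $M_\phi(u)$, or a repeated neighbor of $u$); so ``$k=O(1/\eps)$'' is not justified, and in fact the paper instead obtains a fan-length bound of $O(\log n/\eps)$ by restricting the entire chain to a random palette $C$ of that size.

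The paper's actual argument sidesteps all of this. It is a two-stage algorithm. In Stage~1 each iteration builds a \emph{single} Vizing chain $(F,P)$ whose colors are drawn from a random palette $C$ of size $\Theta(\log n/\eps)$ (so $\length(F)\le|C|$ automatically). If $P$ is shorter than $\ell=\Theta(\log^3 n/\eps^3)$ it augments outright; if $P$ is long, it picks one of the first $\ell$ edges of $P$ uniformly at random, \emph{flags} it (uncolors it and removes it from further consideration), and augments the initial segment only. No multi-step continuation is ever attempted, so no potential function is needed. Stage~2 colors the flagged subgraph $G^*$ with $3\Delta(G^*)$ fresh colors by the folklore greedy algorithm. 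The heart of the analysis is then a single concentration bound: a vertex $v$ can only touch the flagged edge of iteration $i$ if $v$ lies on an $\alpha\beta$-path with $\alpha,\beta\in C_i$ starting at the random pivot, and counting (at most $\binom{|C_i|}{2}$ paths, at most $\Delta$ choices of $e_i$, uniform flag position among $\ell$) gives $\Pr[d_i(v)=1\mid\phi_i,C_i]\le \Delta\kappa^2/(\ell(m-i+1))$; summing over $i$ and applying the Kuszmaul--Qi multiplicative Azuma inequality yields $\deg_{G^*}(v)\le\eps\Delta/6$ whp, and a union bound finishes. This is why the paper's proof is short: there is one random sampling step per iteration, one clean conditional probability bound, and one off-the-shelf supermartingale inequality, with no dependence-tracking across steps. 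Your plan would reproduce (a weaker version of) the results of \cite{bernshteyn2024linear}, which the paper is explicitly contrasting itself against, rather than reproduce this paper's argument.
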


Note that our result above asymptotically improves upon that of Duan, He, and Zhang stated in Table~\ref{table:history}.
Additionally, for $\Delta = O(\log n /\eps)$, there are more efficient algorithms for Vizing edge-coloring (see \cite[\S4]{fastEdgeColoring} for a simple $O(\Delta^3\,m\log n)$-time algorithm) and so the dense regime is really the only regime of interest.

The remainder of this introduction is structured as follows: in \S\ref{subsection: comparison}, we discuss the approaches of the results stated so far in more detail in order to justify the simplicity of our approach; in \S\ref{subsection: overview}, we provide an informal overview of our algorithm and a proof sketch of the analysis; and in \S\ref{subsection: conclusion}, we outline potential future directions.

\subsection{Discussion of Prior Work}\label{subsection: comparison}

Throughout the rest of the paper, we fix a graph $G$ with $n$ vertices, $m$ edges, and maximum degree $\Delta$, and we write $V \defeq V(G)$ and $E \defeq E(G)$.
We fix $\epsilon \in (0,1)$ such that $\Delta = \Omega\left(\log n/\epsilon\right)$ (where the hidden constant is assumed to be sufficiently large) and set $q \defeq (1+\epsilon)\Delta$. 
Without loss of generality, we may assume that $q$ is an integer. %As usual, we let $[q] \defeq \set{1,2,\ldots,q}$.
We call a function $\phi \colon E\to [q]\cup \{\blank\}$ a \emphd{partial $q$-edge-coloring} (or simply a \emphd{partial coloring}) of $G$. Here $\phi(e) = \blank$ indicates that the edge $e$ is uncolored. As usual, $\dom(\phi)$ denotes the \emphd{domain} of $\phi$, i.e., the set of all colored edges.

A standard approach toward edge-coloring is to construct so-called \emphd{augmenting subgraphs}.
In fact, nearly all of the algorithms mentioned so far (with the exception of \cite{assadi2024faster, bhattacharya2024nibbling}) employ this approach.
The idea, in a nutshell, is to extend a partial coloring to include an uncolored edge by modifying the colors of ``few'' colored edges.
Formally, we define augmenting subgraphs as follows (this definition is taken from \cite[Definition~1.4]{bernshteyn2024linear}):

\begin{definition}[Augmenting subgraphs]\label{defn:aug}
    Let $\phi \colon E \to [q] \cup \set{\blank}$ be a proper partial $q$-edge-coloring with domain $\dom(\phi) \subset E$. A subgraph $H \subseteq G$ is \emphd{$e$-augmenting} for an uncolored edge $e \in E \setminus \dom(\phi)$ if $e \in E(H)$ and there is a proper partial coloring $\phi'$ with $\dom(\phi') = \dom(\phi) \cup \set{e}$ that agrees with $\phi$ on the edges that are not in $E(H)$; in other words, by only modifying the colors of the edges of $H$, it is possible to add $e$ to the set of colored edges. We refer to such a modification operation as \emphd{augmenting} $\phi$ using $H$.
\end{definition}

This yields an algorithmic framework for edge-coloring.
Namely, proceed in an iterative fashion and at each iteration, first construct an augmenting subgraph $H$ with respect to the current partial coloring $\phi$, and then augment $\phi$ using $H$.
This is the essence of the idea described earlier.
See Algorithm~\ref{temp:seq} for an outline of this framework.

% \vspace{0.1in}
{
\floatname{algorithm}{Algorithm Template}
\begin{algorithm}[htb!]\small
    \caption{A $q$-edge-coloring algorithm}\label{temp:seq}
    \begin{flushleft}
        \textbf{Input}: A graph $G = (V,E)$ of maximum degree $\Delta$. \\
        % \medskip
        \textbf{Output}: A proper $q$-edge-coloring of $G$.
    \end{flushleft}
    \begin{algorithmic}[1]
        \State $\phi \gets$ the empty coloring
        \While{there are uncolored edges}
            \State\label{step:choose_S} Pick an uncolored edge $e$.
            \State\label{step:construct_H} Find an $e$-augmenting subgraph $H$.
            \State Augment $\phi$ using $H$ (thus adding $e$ to the set of colored edges).
        \EndWhile
        \State \Return $\phi$
    \end{algorithmic}
\end{algorithm}
}

As mentioned earlier, nearly all of the algorithms in Table~\ref{table:history} employ this template (with some modifications).
In Vizing's original proof, he describes how to construct an $e$-augmenting subgraph (although he did not use this terminology) for any uncolored edge $e$ consisting of a \emphd{fan}---i.e., a set of edges incident to a common vertex---and an \emphd{alternating path}---i.e., a path whose edge colors form the sequence $\alpha$, $\beta$, $\alpha$, $\beta$, \ldots{} for some $\alpha$, $\beta \in [q]$; see Fig.~\ref{fig:vizing} for an illustration.
Such an augmenting subgraph, which we call a \emphd{Vizing chain} and denote $(F, P)$ for the fan $F$ and path $P$, can be constructed and augmented in time proportional to the lengths of $F$ and $P$, i.e., the number of edges contained in the Vizing chain.
As the length of a fan is trivially at most $\Delta$ and that of an alternating path is at most $n$, this yields the $O(m(\Delta + n)) = O(mn)$ runtime of the algorithms in \cite{Bollobas, RD, MG}.
There are now two challenges to overcome: long fans and long paths.

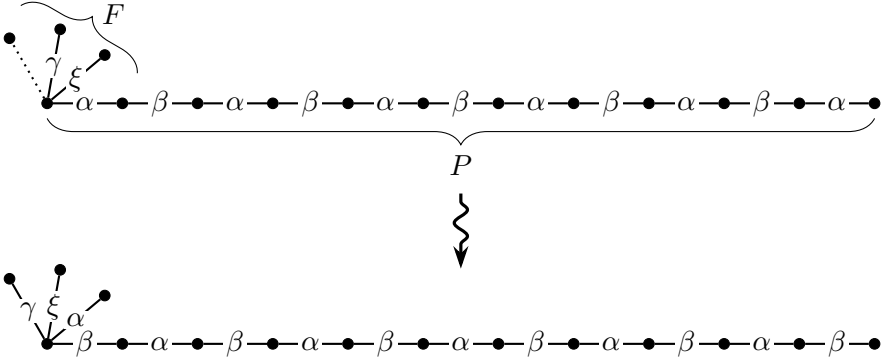
\begin{figure}[htb!]
    \centering
        \begin{tikzpicture}
            \node[circle,fill=black,draw,inner sep=0pt,minimum size=4pt] (a) at (0,0) {};
        	\path (a) ++(0:1) node[circle,fill=black,draw,inner sep=0pt,minimum size=4pt] (b) {};
        	\path (b) ++(0:1) node[circle,fill=black,draw,inner sep=0pt,minimum size=4pt] (c) {};
                \path (c) ++(0:1) node[circle,fill=black,draw,inner sep=0pt,minimum size=4pt] (d) {};
                \path (d) ++(0:1) node[circle,fill=black,draw,inner sep=0pt,minimum size=4pt] (e) {};
                \path (e) ++(0:1) node[circle,fill=black,draw,inner sep=0pt,minimum size=4pt] (l) {};
                \path (l) ++(0:1) node[circle,fill=black,draw,inner sep=0pt,minimum size=4pt] (m) {};
                \path (m) ++(0:1) node[circle,fill=black,draw,inner sep=0pt,minimum size=4pt] (n) {};
                \path (n) ++(0:1) node[circle,fill=black,draw,inner sep=0pt,minimum size=4pt] (o) {};
                \path (o) ++(0:1) node[circle,fill=black,draw,inner sep=0pt,minimum size=4pt] (f) {};
                \path (f) ++(0:1) node[circle,fill=black,draw,inner sep=0pt,minimum size=4pt] (g) {};
                \path (g) ++(0:1) node[circle,fill=black,draw,inner sep=0pt,minimum size=4pt] (h) {};

                \path (a) ++(40:1) node[circle,fill=black,draw,inner sep=0pt,minimum size=4pt] (i) {};
                \path (a) ++(80:1) node[circle,fill=black,draw,inner sep=0pt,minimum size=4pt] (j) {};
                \path (a) ++(120:1) node[circle,fill=black,draw,inner sep=0pt,minimum size=4pt] (k) {};

                \draw[thick] (a) to node[midway,inner sep=1pt,outer sep=1pt,minimum size=4pt,fill=white] {$\alpha$} (i) (a) to node[midway,inner sep=1pt,outer sep=1pt,minimum size=4pt,fill=white] {$\xi$} (j) (a) to node[midway,inner sep=1pt,outer sep=1pt,minimum size=4pt,fill=white] {$\gamma$} (k);
         
        	\draw[thick] (a) to node[midway,inner sep=1pt,outer sep=1pt,minimum size=4pt,fill=white] {$\beta$} (b) to node[midway,inner sep=1pt,outer sep=1pt,minimum size=4pt,fill=white] {$\alpha$} (c) to node[midway,inner sep=1pt,outer sep=1pt,minimum size=4pt,fill=white] {$\beta$} (d) to node[midway,inner sep=1pt,outer sep=1pt,minimum size=4pt,fill=white] {$\alpha$} (e) to node[midway,inner sep=1pt,outer sep=1pt,minimum size=4pt,fill=white] {$\beta$} (l) to node[midway,inner sep=1pt,outer sep=1pt,minimum size=4pt,fill=white] {$\alpha$} (m) to node[midway,inner sep=1pt,outer sep=1pt,minimum size=4pt,fill=white] {$\beta$} (n) to node[midway,inner sep=1pt,outer sep=1pt,minimum size=4pt,fill=white] {$\alpha$} (o) to node[midway,inner sep=1pt,outer sep=1pt,minimum size=4pt,fill=white] {$\beta$} (f) to node[midway,inner sep=1pt,outer sep=1pt,minimum size=4pt,fill=white] {$\alpha$} (g) to node[midway,inner sep=1pt,outer sep=1pt,minimum size=4pt,fill=white] {$\beta$} (h);

        \begin{scope}[yshift=3.2cm]
            \node[circle,fill=black,draw,inner sep=0pt,minimum size=4pt] (a) at (0,0) {};
        	\path (a) ++(0:1) node[circle,fill=black,draw,inner sep=0pt,minimum size=4pt] (b) {};
        	\path (b) ++(0:1) node[circle,fill=black,draw,inner sep=0pt,minimum size=4pt] (c) {};
                \path (c) ++(0:1) node[circle,fill=black,draw,inner sep=0pt,minimum size=4pt] (d) {};
                \path (d) ++(0:1) node[circle,fill=black,draw,inner sep=0pt,minimum size=4pt] (e) {};
                \path (e) ++(0:1) node[circle,fill=black,draw,inner sep=0pt,minimum size=4pt] (l) {};
                \path (l) ++(0:1) node[circle,fill=black,draw,inner sep=0pt,minimum size=4pt] (m) {};
                \path (m) ++(0:1) node[circle,fill=black,draw,inner sep=0pt,minimum size=4pt] (n) {};
                \path (n) ++(0:1) node[circle,fill=black,draw,inner sep=0pt,minimum size=4pt] (o) {};
                \path (o) ++(0:1) node[circle,fill=black,draw,inner sep=0pt,minimum size=4pt] (f) {};
                \path (f) ++(0:1) node[circle,fill=black,draw,inner sep=0pt,minimum size=4pt] (g) {};
                \path (g) ++(0:1) node[circle,fill=black,draw,inner sep=0pt,minimum size=4pt] (h) {};

                \path (a) ++(40:1) node[circle,fill=black,draw,inner sep=0pt,minimum size=4pt] (i) {};
                \path (a) ++(80:1) node[circle,fill=black,draw,inner sep=0pt,minimum size=4pt] (j) {};
                \path (a) ++(120:1) node[circle,fill=black,draw,inner sep=0pt,minimum size=4pt] (k) {};

                \draw[thick] (a) to node[midway,inner sep=1pt,outer sep=1pt,minimum size=4pt,fill=white] {$\xi$} (i) (a) to node[midway,inner sep=1pt,outer sep=1pt,minimum size=4pt,fill=white] {$\gamma$} (j);

                \draw[thick, dotted] (a) -- (k);
         
        	\draw[thick] (a) to node[midway,inner sep=1pt,outer sep=1pt,minimum size=4pt,fill=white] {$\alpha$} (b) to node[midway,inner sep=1pt,outer sep=1pt,minimum size=4pt,fill=white] {$\beta$} (c) to node[midway,inner sep=1pt,outer sep=1pt,minimum size=4pt,fill=white] {$\alpha$} (d) to node[midway,inner sep=1pt,outer sep=1pt,minimum size=4pt,fill=white] {$\beta$} (e) to node[midway,inner sep=1pt,outer sep=1pt,minimum size=4pt,fill=white] {$\alpha$} (l) to node[midway,inner sep=1pt,outer sep=1pt,minimum size=4pt,fill=white] {$\beta$} (m) to node[midway,inner sep=1pt,outer sep=1pt,minimum size=4pt,fill=white] {$\alpha$} (n) to node[midway,inner sep=1pt,outer sep=1pt,minimum size=4pt,fill=white] {$\beta$} (o) to node[midway,inner sep=1pt,outer sep=1pt,minimum size=4pt,fill=white] {$\alpha$} (f) to node[midway,inner sep=1pt,outer sep=1pt,minimum size=4pt,fill=white] {$\beta$} (g) to node[midway,inner sep=1pt,outer sep=1pt,minimum size=4pt,fill=white] {$\alpha$} (h);

            \draw[decoration={brace,amplitude=10pt,mirror}, decorate] (0, -0.2) -- node [midway,below,xshift=0pt,yshift=-10pt] {$P$} (11,-0.2);
            \draw[decoration={brace,amplitude=10pt},decorate] (-0.35,1.3) -- node [midway,above,yshift=2pt,xshift=13pt] {$F$} (1.2, 0.4);
            
        \end{scope}

        \begin{scope}[yshift=2.5cm]
            \draw[-{Stealth[length=3mm,width=2mm]},very thick,decoration = {snake,pre length=3pt,post length=7pt,},decorate] (5.5,-0.5) -- (5.5,-1.5);
        \end{scope}
        	
        \end{tikzpicture}
    \caption{The process of augmenting a Vizing chain $(F, P)$.}
    \label{fig:vizing}
\end{figure}

In \cite{duan2019dynamic}, Duan, He, and Zhang made the following algorithmic observation regarding the proof of Vizing's theorem: 

\begin{observation}\label{obs: vizing}
    Let $G$ be a graph of maximum degree $\Delta$ and let $\phi$ be a proper partial $q$-edge-coloring of $G$ for $q \geq \Delta + 1$.
    Let $C \subseteq [q]$ be such that for each $v \in V(G)$, there is at least $1$ color in $C$ that is not used on any of its incident edges.
    Then there is an algorithm which constructs an $e$-augmenting Vizing chain $(F, P)$ for any uncolored edge $e$ in time $O(|C|^2 + \length(P))$ such that the colored edges contained in $(F, P)$ are colored with colors from $C$.
\end{observation}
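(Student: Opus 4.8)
The plan is to replay Vizing's classical fan-and-alternating-path augmentation step, but using the set $C$ as the palette in place of $[q]$, and to check that the hypothesis on $C$ is exactly what is needed for this to go through. Say a color is \emph{$C$-missing} at a vertex $v$ if it lies in $C$ and is absent from all edges incident to $v$; by assumption every vertex has a $C$-missing color, and this plays the role of the single ``spare'' color available in Vizing's proof when $q = \Delta + 1$.

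First I would build the fan. Let $e = xy_0$ with $x$ the pivot. Grow a sequence $xy_0, xy_1, \dots$ of edges at $x$ greedily: given $xy_0, \dots, xy_i$, pick a $C$-missing color $\beta_i$ at $y_i$ (which exists by hypothesis); if $\beta_i$ is also missing at $x$, stop, and otherwise set $xy_{i+1}$ to be the unique edge of color $\beta_i$ at $x$, stopping as well if $y_{i+1}$ already appears in the fan. Since $\phi(xy_m) = \beta_{m-1} \in C$ for each $m \ge 1$ and distinct edges at $x$ carry distinct colors, the colors $\beta_0, \dots, \beta_{k-1}$ are distinct elements of $C$, so the fan has at most $|C| + 1$ edges. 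If the growth halts because $\beta_k$ is missing at $x$, activate the fan directly --- set $\phi(xy_i) \gets \beta_i$ for $i = 0, \dots, k$ --- which colors $e$, writes only colors in $\{\beta_0, \dots, \beta_k\} \subseteq C$, and needs no alternating path ($\length(P) = 0$).

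Otherwise the growth halts on a repeat, so $\beta_k = \phi(xy_j) = \beta_{j-1}$ for some $1 \le j \le k$, and $\beta \defeq \beta_k \in C$ is missing at both $y_{j-1}$ and $y_k$. Here I would pick a $C$-missing color $\alpha$ at $x$ (it exists, and $\alpha \ne \beta$ because $\beta$ occurs at $x$), trace the maximal $\alpha\beta$-alternating paths starting at $y_{j-1}$ and at $y_k$, and --- exactly as in Vizing's argument --- flip whichever of the two avoids $x$ (at least one does, since the maximal $\alpha\beta$-path through $x$ has a unique other endpoint, hence reaches at most one of $y_{j-1}$, $y_k$). This makes $\alpha$ missing at $x$ and at the corresponding endpoint $y_\ell \in \{y_{j-1}, y_k\}$, whereupon the truncated fan $xy_0, \dots, xy_\ell$ can be activated using the colors $\beta_0, \dots, \beta_{\ell-1}, \alpha$. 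Every color ever written lies in $\{\beta_0, \dots, \beta_{k-1}, \alpha, \beta\} \subseteq C$, giving the last assertion. For the runtime, maintain for every vertex $v$ and color $c \in C$ a pointer to the edge of color $c$ at $v$ (updated in $O(1)$ per recoloring): then finding a $C$-missing color at a fan vertex and testing colors at $x$ each cost $O(|C|)$, so the fan is processed in $O(|C|^2)$ time; tracing the two candidate alternating paths in lock-step and flipping the relevant one costs $O(\length(P))$; and the final activation costs $O(|C|)$.

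The hard part will be the maximal-fan case: one must verify that truncating the fan at the correctly chosen endpoint $y_\ell$ and flipping the alternating path leaves a \emph{valid} fan --- the condition ``$\phi(xy_m)$ is missing at $y_{m-1}$'' must survive the flip for every retained edge, and this is precisely what forces the choice between $y_{j-1}$ and $y_k$ --- and that the colors at $x$ are left undisturbed. This is the delicate core of Vizing's original proof; the only additional point here is that confining all color choices to $C$ never obstructs it, which holds exactly because $C$ contains a missing color at every vertex.
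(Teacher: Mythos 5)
Your proposal reconstructs the right idea: run Vizing's classical fan-plus-alternating-path augmentation, restricting every color choice to $C$, and observe that the hypothesis on $C$ supplies the ``spare'' missing color needed at each fan vertex and at $x$. This is exactly the spirit of the paper's own development (Algorithms~\ref{alg:fan} and \ref{alg:viz}, together with Lemma~\ref{lemma:fan}), and your fan-construction and runtime accounting match the paper's.

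The one place you deviate in implementation is the alternating path. The paper's Algorithm~\ref{alg:viz} traces a \emph{single} $\alpha\beta$-path $P$ starting at the pivot $x$ (consistent with Definition~\ref{defn:viz}, which requires $\vstart(P) = x$), then Lemma~\ref{lemma:fan} branches on whether $\vend(P) = y_{j-1}$ to decide between shifting $F$ or the truncated $F'$. You instead trace the two candidate paths from the endpoints $y_{j-1}$ and $y_k$ and ``flip whichever avoids $x$.'' That rule is not quite the full story: both paths can avoid $x$, and in the degenerate case where $y_{j-1}$ and $y_k$ are the two endpoints of the \emph{same} $\alpha\beta$-path, flipping it makes your $\beta = \phi(xy_j)$ present at $y_{j-1}$, so the full fan $F$ is no longer valid and only the truncation at $y_{j-1}$ works. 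So the choice is not symmetric even when both paths miss $x$; the safe rule is ``use $y_{j-1}$ whenever its path avoids $x$, and fall back to $y_k$ only when the $y_{j-1}$-path reaches $x$.'' You gesture at this in your final paragraph (``this is precisely what forces the choice''), but as stated your selection rule is underspecified. Tracing from $x$, as the paper does, sidesteps this entirely: there is one path, one endpoint to inspect, and the case split in Lemma~\ref{lemma:fan} is forced. It also produces a chain $(F,P)$ that literally satisfies Definition~\ref{defn:viz}, and --- more importantly for the rest of the paper --- it is the formulation that lets one truncate $P$ at a fixed distance from $x$ and flag an edge, which is the mechanism the whole algorithm rests on. Your lock-step tracing does achieve the claimed $O(|C|^2 + \length(P))$ bound, so the runtime claim is fine; the gap is only the imprecise selection rule, which needs the one extra sentence above to be airtight.
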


Note that setting $C = [\Delta + 1]$ above yields an algorithm for Vizing edge-coloring.
Furthermore, as $\phi$ is proper, it must be the case that $\length(F) \leq |C|$.
With this in mind, Duan, He, and Zhang describe the following algorithm to construct an $e$-augmenting Vizing chain at Step~\ref{step:construct_H}:
\begin{itemize}
    \item sample a palette $C \subseteq [q]$ satisfying $|C| = O(\log n / \eps)$, 
    \item construct an $e$-augmenting Vizing chain $(F, P)$ consisting of edges colored with colors from $C$ (note that $\length(F) = O(\log n/\eps)$),
    \item if the path $P$ is ``long'', try again.
\end{itemize}
They show that $C$ satisfies the conditions of Observation~\ref{obs: vizing} with high probability.
Furthermore, for an appropriate notion of ``long'', they are able to show that the above procedure succeeds within $\poly(\log n, 1/\eps)$ attempts with high probability.

The algorithms of Elkin and Khuzman \cite{elkin2024deterministic} and Bernshteyn and the author \cite{bernshteyn2024linear} build upon earlier work of Bernshteyn and the author \cite{fastEdgeColoring}.
As mentioned in \S\ref{subsec:background}, it provides (among other things) a $(\Delta + 1)$-edge-coloring algorithm with running time $O(m)$ for $\Delta = O(1)$.
In \cite{elkin2024deterministic}, Elkin and Khuzman use it as a subroutine in order to construct a proper $(1+\epsilon)\Delta$-edge-coloring. 
Roughly, their idea is to recursively split the edges of $G$ into subgraphs of progressively smaller maximum degree, eventually reducing the degree to a constant, and then run the algorithm from \cite{fastEdgeColoring} (which follows the template of Algorithm~\ref{temp:seq}) on each constant-degree subgraph separately. 
The approach of \cite{bernshteyn2024linear} is to instead implement a variant of their \textsf{Multi-Step Vizing Algorithm} from \cite{fastEdgeColoring} on $G$ directly and make use of the larger set of available colors ($(1+\epsilon)\Delta$ rather than $\Delta + 1$) in the analysis.
Given a partial coloring $\phi$ and an uncolored edge $e$, it outputs an $e$-augmenting subgraph $H$ of a special form, called a \emphd{multi-step Vizing chain}, which as the name suggests, consists of multiple Vizing chains glued together in an intricate fashion.

We conclude with a discussion of the two algorithms which do not follow the template of Algorithm~\ref{temp:seq}.
In \cite{bhattacharya2024nibbling}, the authors employ a powerful combinatorial tool known as the ``R\"odl nibble method.''
They describe a two-stage coloring procedure: in Stage 1, color ``most'' of the edges of $G$ using the nibble method; in Stage 2, color the remaining edges using a ``folklore'' linear-time $(2+\eps)\Delta$-edge-coloring algorithm (see \S\ref{subsec:stage two} for a description of this algorithm).
In order for the algorithm to succeed, one must ensure the uncolored edges after Stage 1 induce a graph of maximum degree $O(\eps\Delta)$.
The authors show this holds with high probability through a delicate argument involving a number of probabilistic and structural tools.
In \cite{assadi2024faster}, Assadi also describes a two-stage coloring procedure.
To use their terminology, they design a procedure to construct $\Delta$ disjoint \emphd{fair matchings} in linear time.
(The precise definition of ``fair'' is rather technical and unrelated to our work and so we omit it here.
Suffice it to say that the fairness plays a key role in ensuring the uncolored edges after Stage 1 induce a graph of small maximum degree.)
The procedure and the analysis are technical, involving a random walk over an auxiliary directed graph as well as the use of a number of probabilistic and structural tools.

Note that each of the ideas mentioned in this section (multi-step Vizing chains, the nibble method, random walks, etc.) are complex and involve intricate arguments.
We emphasize that although our results are weaker, the approach is significantly simpler, both in the ideas and in the analysis.

\subsection{Informal Overview}\label{subsection: overview}

In this section, we will provide an informal overview of our algorithm as well as a proof sketch of the analysis.
For simplicity, we will describe how to compute a $(1 + O(\eps))\Delta$-edge-coloring (the argument for $q = (1+\eps)\Delta$ follows by re-parameterizing $\eps \gets \Theta(\eps)$).
We design a two-stage coloring procedure, where the first stage constructs a proper partial $(1 + O(\eps))\Delta$-edge-coloring, and the second stage completes the coloring by employing the folklore $(2+\eps)\Delta$-edge-coloring algorithm (see Algorithm~\ref{alg:greedy}).

Following the template of Algorithm~\ref{temp:seq}, our algorithm for Stage 1 differs from that of Duan, He, and Zhang \cite{duan2019dynamic} in two ways: the implementation of Steps~\ref{step:choose_S} and \ref{step:construct_H}.
We first discuss our implementation of Step~\ref{step:construct_H}.
We construct a Vizing chain in an identical fashion to that of \cite{duan2019dynamic}, i.e., sample a palette $C\subseteq [q]$ of $O(\log n / \eps)$ colors and construct an $e$-augmenting Vizing chain $(F, P)$ consisting of edges colored with colors from $C$.
For an appropriate parameter $\ell \in \N$, we say $P$ is \emphd{long} if $\length(P) \geq \ell$ and \emphd{short} otherwise.
As mentioned earlier, the algorithm of \cite{duan2019dynamic} augments the current coloring using $(F, P)$ if $P$ is short, and repeats the above process if $P$ is long.
Rather than repeating the process if $P$ is long, we instead sample one of the first $\ell$ edges on $P$ uniformly at random and \emphd{flag} it (flagged edges are colored during Stage 2).
We then augment the \emphd{initial segment} of the Vizing chain ending just before this edge.
(A similar idea was employed by Su and Vu in their distributed algorithm for $(\Delta + 2)$-edge-coloring \cite{su2019towards}.)
We illustrate this procedure in Fig.~\ref{fig:main_change} below:

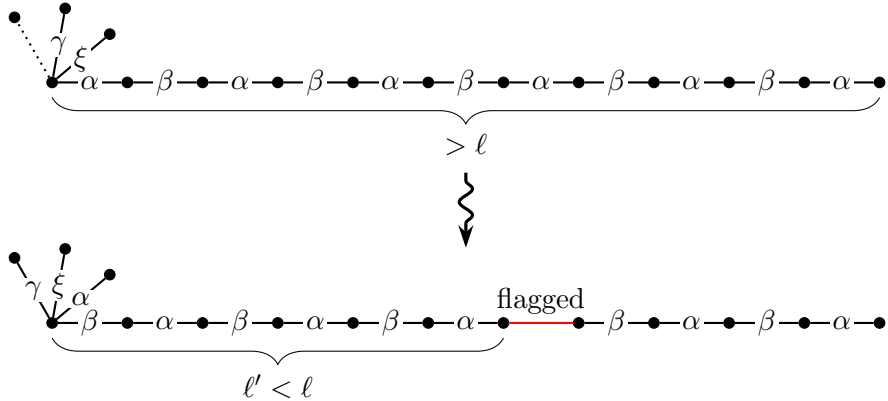
\begin{figure}[htb!]
    \centering
        \begin{tikzpicture}
            \node[circle,fill=black,draw,inner sep=0pt,minimum size=4pt] (a) at (0,0) {};
        	\path (a) ++(0:1) node[circle,fill=black,draw,inner sep=0pt,minimum size=4pt] (b) {};
        	\path (b) ++(0:1) node[circle,fill=black,draw,inner sep=0pt,minimum size=4pt] (c) {};
                \path (c) ++(0:1) node[circle,fill=black,draw,inner sep=0pt,minimum size=4pt] (d) {};
                \path (d) ++(0:1) node[circle,fill=black,draw,inner sep=0pt,minimum size=4pt] (e) {};
                \path (e) ++(0:1) node[circle,fill=black,draw,inner sep=0pt,minimum size=4pt] (l) {};
                \path (l) ++(0:1) node[circle,fill=black,draw,inner sep=0pt,minimum size=4pt] (m) {};
                \path (m) ++(0:1) node[circle,fill=black,draw,inner sep=0pt,minimum size=4pt] (n) {};
                \path (n) ++(0:1) node[circle,fill=black,draw,inner sep=0pt,minimum size=4pt] (o) {};
                \path (o) ++(0:1) node[circle,fill=black,draw,inner sep=0pt,minimum size=4pt] (f) {};
                \path (f) ++(0:1) node[circle,fill=black,draw,inner sep=0pt,minimum size=4pt] (g) {};
                \path (g) ++(0:1) node[circle,fill=black,draw,inner sep=0pt,minimum size=4pt] (h) {};

                \path (a) ++(40:1) node[circle,fill=black,draw,inner sep=0pt,minimum size=4pt] (i) {};
                \path (a) ++(80:1) node[circle,fill=black,draw,inner sep=0pt,minimum size=4pt] (j) {};
                \path (a) ++(120:1) node[circle,fill=black,draw,inner sep=0pt,minimum size=4pt] (k) {};

                \draw[thick] (a) to node[midway,inner sep=1pt,outer sep=1pt,minimum size=4pt,fill=white] {$\alpha$} (i) (a) to node[midway,inner sep=1pt,outer sep=1pt,minimum size=4pt,fill=white] {$\xi$} (j) (a) to node[midway,inner sep=1pt,outer sep=1pt,minimum size=4pt,fill=white] {$\gamma$} (k);
         
        	\draw[thick] (a) to node[midway,inner sep=1pt,outer sep=1pt,minimum size=4pt,fill=white] {$\beta$} (b) to node[midway,inner sep=1pt,outer sep=1pt,minimum size=4pt,fill=white] {$\alpha$} (c) to node[midway,inner sep=1pt,outer sep=1pt,minimum size=4pt,fill=white] {$\beta$} (d) to node[midway,inner sep=1pt,outer sep=1pt,minimum size=4pt,fill=white] {$\alpha$} (e) to node[midway,inner sep=1pt,outer sep=1pt,minimum size=4pt,fill=white] {$\beta$} (l) to node[midway,inner sep=1pt,outer sep=1pt,minimum size=4pt,fill=white] {$\alpha$} (m)  (n) to node[midway,inner sep=1pt,outer sep=1pt,minimum size=4pt,fill=white] {$\beta$} (o) to node[midway,inner sep=1pt,outer sep=1pt,minimum size=4pt,fill=white] {$\alpha$} (f) to node[midway,inner sep=1pt,outer sep=1pt,minimum size=4pt,fill=white] {$\beta$} (g) to node[midway,inner sep=1pt,outer sep=1pt,minimum size=4pt,fill=white] {$\alpha$} (h);

                \draw[thick, red] (m) -- (n);
                \node at (6.5, 0.3) {flagged};

            \draw[decoration={brace,amplitude=10pt,mirror}, decorate] (0, -0.2) -- node [midway,below,xshift=0pt,yshift=-10pt] {$\ell'< \ell$} (6,-0.2);

        \begin{scope}[yshift=3.2cm]
            \node[circle,fill=black,draw,inner sep=0pt,minimum size=4pt] (a) at (0,0) {};
        	\path (a) ++(0:1) node[circle,fill=black,draw,inner sep=0pt,minimum size=4pt] (b) {};
        	\path (b) ++(0:1) node[circle,fill=black,draw,inner sep=0pt,minimum size=4pt] (c) {};
                \path (c) ++(0:1) node[circle,fill=black,draw,inner sep=0pt,minimum size=4pt] (d) {};
                \path (d) ++(0:1) node[circle,fill=black,draw,inner sep=0pt,minimum size=4pt] (e) {};
                \path (e) ++(0:1) node[circle,fill=black,draw,inner sep=0pt,minimum size=4pt] (l) {};
                \path (l) ++(0:1) node[circle,fill=black,draw,inner sep=0pt,minimum size=4pt] (m) {};
                \path (m) ++(0:1) node[circle,fill=black,draw,inner sep=0pt,minimum size=4pt] (n) {};
                \path (n) ++(0:1) node[circle,fill=black,draw,inner sep=0pt,minimum size=4pt] (o) {};
                \path (o) ++(0:1) node[circle,fill=black,draw,inner sep=0pt,minimum size=4pt] (f) {};
                \path (f) ++(0:1) node[circle,fill=black,draw,inner sep=0pt,minimum size=4pt] (g) {};
                \path (g) ++(0:1) node[circle,fill=black,draw,inner sep=0pt,minimum size=4pt] (h) {};

                \path (a) ++(40:1) node[circle,fill=black,draw,inner sep=0pt,minimum size=4pt] (i) {};
                \path (a) ++(80:1) node[circle,fill=black,draw,inner sep=0pt,minimum size=4pt] (j) {};
                \path (a) ++(120:1) node[circle,fill=black,draw,inner sep=0pt,minimum size=4pt] (k) {};

                \draw[thick] (a) to node[midway,inner sep=1pt,outer sep=1pt,minimum size=4pt,fill=white] {$\xi$} (i) (a) to node[midway,inner sep=1pt,outer sep=1pt,minimum size=4pt,fill=white] {$\gamma$} (j);

                \draw[thick, dotted] (a) -- (k);
         
        	\draw[thick] (a) to node[midway,inner sep=1pt,outer sep=1pt,minimum size=4pt,fill=white] {$\alpha$} (b) to node[midway,inner sep=1pt,outer sep=1pt,minimum size=4pt,fill=white] {$\beta$} (c) to node[midway,inner sep=1pt,outer sep=1pt,minimum size=4pt,fill=white] {$\alpha$} (d) to node[midway,inner sep=1pt,outer sep=1pt,minimum size=4pt,fill=white] {$\beta$} (e) to node[midway,inner sep=1pt,outer sep=1pt,minimum size=4pt,fill=white] {$\alpha$} (l) to node[midway,inner sep=1pt,outer sep=1pt,minimum size=4pt,fill=white] {$\beta$} (m) to node[midway,inner sep=1pt,outer sep=1pt,minimum size=4pt,fill=white] {$\alpha$} (n) to node[midway,inner sep=1pt,outer sep=1pt,minimum size=4pt,fill=white] {$\beta$} (o) to node[midway,inner sep=1pt,outer sep=1pt,minimum size=4pt,fill=white] {$\alpha$} (f) to node[midway,inner sep=1pt,outer sep=1pt,minimum size=4pt,fill=white] {$\beta$} (g) to node[midway,inner sep=1pt,outer sep=1pt,minimum size=4pt,fill=white] {$\alpha$} (h);

            \draw[decoration={brace,amplitude=10pt,mirror}, decorate] (0, -0.2) -- node [midway,below,xshift=0pt,yshift=-10pt] {$> \ell$} (11,-0.2);
            
        \end{scope}

        \begin{scope}[yshift=2.5cm]
            \draw[-{Stealth[length=3mm,width=2mm]},very thick,decoration = {snake,pre length=3pt,post length=7pt,},decorate] (5.5,-0.5) -- (5.5,-1.5);
        \end{scope}
        	
        \end{tikzpicture}
    \caption{The process of augmenting an initial segment of a Vizing chain after flagging an edge.}
    \label{fig:main_change}
\end{figure}

For Step~\ref{step:random_edge}, Duan, He, and Zhang design a rather complex randomized procedure to select an uncolored edge at Step~\ref{step:choose_S}.
First, they partition $V$ into $k \defeq \Theta(\log \Delta)$ sets $V_1, \ldots, V_k$ such that $v \in V_i$ if $\deg_G(v) \in [2^{i-1}, 2^i)$.
Starting with $j = k$, they sample a vertex $v \in V_j$ uniformly at random and an uncolored edge incident to $v$ to color.
They continue in this fashion until all edges with an endpoint in $V_j$ are colored.
Once this condition is satisfied, they update $j$ to $j-1$.
As mentioned earlier, a key part of their argument is to show that their algorithm constructs a Vizing chain $(F, P)$ where $P$ is short within $\poly(\log n, 1/\eps)$ attempts.
The analysis is rather lengthy and technical with this sampling procedure playing a crucial role.
As a result of our flagging procedure described earlier, we may simply sample an edge uniformly at random among all uncolored edges.
This can, of course, be implemented very simply and we are therefore able to avoid the complex arguments of \cite{duan2019dynamic}.

Note that we need not construct the entire Vizing chain $(F, P)$.
Therefore, we may implement this procedure in $O(\log^2n/\eps^2 + \ell)$ time (see the discussion after Algorithm~\ref{alg:viz}).
An algorithmic sketch of the first stage is provided in Algorithm~\ref{temp:stage1}.

% \vspace{0.1in}
{
\floatname{algorithm}{Algorithm Sketch}
\begin{algorithm}[htb!]\small
    \caption{Stage 1 of our $q$-edge-coloring algorithm}\label{temp:stage1}
    \begin{flushleft}
        \textbf{Input}: A graph $G = (V,E)$ of maximum degree $\Delta$. \\
        % \medskip
        \textbf{Output}: A partial $q$-edge-coloring of $G$.
    \end{flushleft}
    \begin{algorithmic}[1]
        \State $\phi \gets$ the empty coloring
        \While{there are uncolored and unflagged edges}
            \State Pick an uncolored and unflagged edge $e$ uniformly at random.
            \State\label{step: choose_C} Find an $e$-augmenting Vizing chain $(F, P)$ using a random sample $C$ of $O(\log n / \eps)$ colors.
            \If{$\length(P) < \ell$}
                \State Augment $\phi$ using $(F, P)$.
            \Else
                \State Pick one of the first $\ell$ edges of $P$ uniformly at random and \textit{flag} it.
                \State Augment the corresponding \textit{initial segment} of the Vizing chain $(F, P)$.
            \EndIf
        \EndWhile
        \State \Return $\phi$
    \end{algorithmic}
\end{algorithm}
}

After Stage 1, we have a partial $q$-edge-coloring of $G$ where the uncolored edges are precisely those that were flagged at some iteration of the \textsf{while loop} in Algorithm~\ref{temp:stage1}.
Let $G'$ be the graph induced by the flagged edges. 
The goal is to color $G'$ using the folklore $(2+\eps)\Delta$-edge-coloring algorithm.
It is now enough to show that $\Delta(G') = O(\eps\Delta)$ with high probability.
This constitutes the heart of our analysis.
We provide a sketch of the proof below.

Consider a vertex $v \in V$ and the $i$-th iteration of the \textsf{while loop} in Algorithm~\ref{temp:stage1}.
In order for $v$ to be incident to the edge flagged during this iteration, it must be the case that $v$ lies on the path $P$ and the flagged edge is incident to $v$.
Note that a maximal alternating path $P$ is uniquely determined by a vertex on $P$ and the colors of the edges on $P$.
Therefore, given $v$ and the palette $C$, there are at most $\binom{|C|}{2} = O(\log^2n/\eps^2)$ possible such paths.
For each such path $P$, there are at most $2\Delta$ choices for the edge $e$ such that $P$ is the path contained in the $e$-augmenting Vizing chain computed (since $e$ must be incident to one of the endpoints of $P$).
Note that during each iteration, the number of uncolored and unflagged edges decreases by precisely $1$ (since the flagged edge is colored at the start of the iteration).
Therefore, as the uncolored edge $e$ and the flagged edge $f$ are chosen uniformly at random, we may conclude that
\[\Pr(v\text{ is incident to the flagged edge during the } i\text{-th iteration}) \,=\, O\left(\frac{1}{(m-i+1)\ell}\,\frac{\Delta \log^2n}{\eps^2}\right),\]
where we use the fact that during the $i$-th iteration, there are $m - i + 1$ uncolored and unflagged edges.
In particular, we have
\[\E[\deg_{G'}(v)] \,=\, \sum_{i = 1}^mO\left(\frac{1}{(m-i+1)\ell}\,\frac{\Delta \log^2n}{\eps^2}\right) \,=\, O\left(\frac{\Delta \log^3n}{\ell\,\eps^2}\right).\]
(Here we use that $1 + 1/2 + 1/3 + \cdots + 1/m \approx \log m$ and $m \leq n^2$.) 
Using a version of Azuma's inequality due to Kuszmaul and Qi \cite{azuma} (see Theorem~\ref{theo:azuma_supermartingale}), we are able to prove a concentration result showing that $\deg(v) = O(\eps\Delta)$ for $\ell = \Theta(\log^3n/\eps^3)$ with high probability; the details are given in \S\ref{sec:proof}.
The bound on $\Delta(G')$ then follows by a union bound over $V$.

By our choice of $\ell$, we may implement Stage 1 in $O(m\log^3n/\eps^3)$ time.
The folklore $(2+\eps)\Delta$-edge-coloring algorithm takes $O(\max\set{m/\eps^2,\,\log^2n/\eps})$ time (see Proposition~\ref{prop:greedy}).
Therefore, the overall runtime of our algorithm is $O(m\log^3n/\eps^3)$, as claimed.

\subsection{Concluding Remarks}\label{subsection: conclusion}

In this work, we consider the algorithmic task of edge-coloring simple graphs with $(1+\eps)\Delta$ colors.
Our focus is on simplicity, both in terms of implementation as well as analysis.
We design an $\tilde O(m/\eps^3)$-time randomized sequential algorithm to solve this task for any graph with $m$ edges having maximum degree $\Delta = \Omega(\log n / \eps)$.
The result improves upon that of \cite{duan2019dynamic}, which was the first near-linear time algorithm for this problem.
We conclude this introduction with potential future directions of inquiry.

The dependence on $\eps$ in the result of Theorem~\ref{theo:main_theo} is rather strong.
In particular, for $\eps = \Theta(\log n/\Delta)$, the runtime of our algorithm is $O(\Delta^3\,m)$, which is very slow for large values of $\Delta$.
The running time of Assadi's algorithm in \cite{assadi2024faster} depends logarithmically on $\eps$ in expectation (see Table~\ref{table:history}), which leads us to ask the following:

\begin{question}
    Is there a simple sequential algorithm for $(1+\eps)\Delta$-edge-coloring with running time $O\left(m\,\poly(\log n,\,\log (1/\eps))\right)$?
\end{question}

Additionally, there are a number of algorithmic results for edge-coloring multigraphs; see, e.g., \cite{dhawan2024edge, gabow1982algorithms, schrijver1998bipartite, alon2003simple}.
In this setting, Observation~\ref{obs: vizing} does not hold for Vizing's theorem.
In particular, it is unclear how to extend the palette sampling idea of \cite{duan2019dynamic} to multigraphs of maximum multiplicity at least $2$.
One could instead consider Shannon's bound on the chromatic index: $\chi'(G) \leq 1.5\Delta$ \cite{Shannon}.

\begin{question}
    Let $G$ be a multigraph of maximum degree $\Delta$ and maximum multiplicity $\mu$.
    For $\Delta \geq (\mu - 1)/\eps$, is there a simple near-linear time sequential algorithm for $(1+\eps)\Delta$-edge-coloring?
    How about for $(1.5+\eps)\Delta$-edge-coloring?
\end{question}

The rest of the paper is structured as follows: in \S\ref{sec:prelim}, we introduce some terminology and background facts that will be used in \S\ref{sec:alg}, where we prove Theorem~\ref{theo:main_theo}.
\section{Notation and Preliminaries}\label{sec:prelim}

This section is split into three subsections. 
In the first, we will introduce some definitions and notation regarding Vizing chains.
In the second, we will describe the data structures we use to store the graph $G$ and attributes regarding a partial coloring $\phi$.
In the third, we will describe the folklore algorithm for $(2+\eps)\Delta$-edge-coloring, which will constitute Stage 2 of our main algorithm.

\subsection{Vizing Chains}\label{subsec:chains}

For $q \in \N$, given a partial $q$-edge-coloring $\phi$ and $x \in V$, we let
\[M(\phi, x) \defeq [q]\setminus\{\phi(xy)\,:\, y \in N_G(x)\}\] 
be the set of all the \emphd{missing} colors at $x$ under the coloring $\phi$.
We note that $|M(\phi, x)| \geq \epsilon\Delta$ for $q = (1+\eps)\Delta$.
An uncolored edge $xy$ is \emphd{$\phi$-happy} if $M(\phi, x)\cap M(\phi, y)\neq \0$. 
If $e = xy$ is $\phi$-happy, we can extend the coloring $\phi$ by assigning any color in $M(\phi, x)\cap M(\phi, y)$ to $e$.

Given a proper partial coloring, we wish to modify it in order to create a partial coloring with a happy edge.
We will do so by constructing so-called called \emphd{Vizing chains} (see Definition~\ref{defn:viz}).
A Vizing chain consists of a \emphd{fan} and an \emphd{alternating path}.
Let us first describe fans.

\begin{definition}[Fans]\label{defn:fans}
    A \emphd{fan} of length $k$ under a partial coloring $\phi$ is a sequence $F = (x, y_0, \ldots, y_{k-1})$ such that:
    \begin{itemize}
        \item $y_0$, \ldots, $y_{k-1}$ are distinct neighbors of $x$,
        \item $\phi(xy_0) = \blank$, and
        \item $\phi(xy_i) \in M(\phi,y_{i-1})$ for $1 \leq i < k$.
    \end{itemize}
    We refer to $x$ as the \emphd{pivot} of the fan, and let $\Pivot(F) \defeq x$, $\vstart(F) \defeq y_0$, and $\vend(F) \defeq y_{k-1}$ denote the pivot, start, and end vertices of a fan $F$. (This notation is uniquely determined unless $k = 1$.)
    Finally, we let $\length(F) = k$.
\end{definition}

The following fact will be useful in our proofs.

\begin{fact}\label{fact:fan}
    Let $\phi$ be a proper partial $q$-edge-coloring and let $F = (x, y_0, \ldots, y_{k-1})$ be a fan under $\phi$.
    Define the coloring $\psi$ as follows:
    \[\psi(e) \defeq \left\{\begin{array}{cc}
        \phi(xy_{i+1}) & \text{if} \quad e = xy_i \quad \text{for} \quad 0 \leq i < k-1; \\
        \blank & \text{if} \quad e = xy_{k-1}; \\
        \phi(e) & \text{otherwise.}
    \end{array}\right.\]
    Then, $\psi$ is a proper partial $q$-edge-coloring.
\end{fact}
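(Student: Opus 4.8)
The plan is to verify the two properties in the definition of a proper partial $q$-edge-coloring directly: that $\psi$ maps $E$ into $[q]\cup\{\blank\}$, and that no two edges sharing an endpoint receive the same non-$\blank$ color. The first is immediate since every value $\psi(e)$ is either $\blank$ or a value of $\phi$, hence lies in $[q]\cup\{\blank\}$. So the content is in checking propriety, i.e.\ that $\psi$ has no two adjacent edges of equal color. I would split this into cases according to which of the ``shifted'' fan edges $xy_0, \ldots, xy_{k-1}$ are involved.

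First I would handle the edges untouched by the shift: if $e, f$ are adjacent and neither is of the form $xy_i$ for $0 \le i \le k-1$, then $\psi(e) = \phi(e)$, $\psi(f) = \phi(f)$, and propriety is inherited from $\phi$. Next, consider an edge $e = xy_i$ with $1 \le i \le k-1$ that has been recolored (for $i < k-1$, $\psi(xy_i) = \phi(xy_{i+1}) \in M(\phi, y_i)$ by the fan condition; for $i = k-1$, $\psi(xy_{k-1}) = \blank$ and there is nothing to check). The only edges adjacent to $xy_i$ are the other edges at $x$ and the edges at $y_i$. For an edge $xy_j$ with $j \ne i$: one observes that the multiset of non-$\blank$ colors $\{\psi(xy_0), \ldots, \psi(xy_{k-1})\}$ equals $\{\phi(xy_1), \ldots, \phi(xy_{k-1})\}$, which are pairwise distinct since $\phi$ is proper at $x$; and any color $\psi(xz)$ for a neighbor $z \notin \{y_0, \ldots, y_{k-1}\}$ equals $\phi(xz)$, which differs from all $\phi(xy_j)$, again by propriety of $\phi$ at $x$. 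Hence no color conflict arises at $x$. For an edge $y_i w$ incident to $y_i$ (with $w \ne x$): its color is $\phi(y_i w)$, which by definition of $M(\phi, y_i)$ is $\ne \psi(xy_i) = \phi(xy_{i+1}) \in M(\phi, y_i)$, so no conflict arises at $y_i$ either. (When $i = k-1$ this case is vacuous since $\psi(xy_{k-1}) = \blank$.)

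Finally I would record that $\psi$ and $\phi$ have the same domain except that $xy_0$, which was uncolored under $\phi$, remains uncolored, while $xy_{k-1}$ becomes uncolored — so $\psi$ is a well-defined partial coloring — but this is bookkeeping rather than a substantive step. I do not expect any real obstacle here: the statement is essentially the observation that ``shifting a fan'' permutes the colors on the edges at the pivot in a way compatible with the missing-color constraints, and the only thing to be careful about is to check both endpoints of each recolored edge, in particular the constraint coming from $M(\phi, y_i)$ which is exactly what the fan definition $\phi(xy_{i+1}) \in M(\phi, y_i)$ guarantees. The mild subtlety worth stating explicitly is the claim that $\{\psi(xy_i) : 0 \le i \le k-1\} \setminus \{\blank\}$ is precisely $\{\phi(xy_i) : 1 \le i \le k-1\}$, so that distinctness of colors at the pivot is preserved; everything else is a direct case check.
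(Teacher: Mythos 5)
The paper omits the proof of this fact, stating only that it is standard, so there is no paper argument to compare against. Your direct verification is correct and is indeed the standard route: you check that no conflict arises at the pivot $x$, using the observation that the non-$\blank$ values among $\psi(xy_0),\ldots,\psi(xy_{k-1})$ are exactly $\phi(xy_1),\ldots,\phi(xy_{k-1})$, which $\phi$ already certifies are pairwise distinct and distinct from the colors of any other edge at $x$; and you check that no conflict arises at each leaf $y_i$, which is precisely what the fan condition $\psi(xy_i)=\phi(xy_{i+1})\in M(\phi,y_i)$ guarantees. One small slip to flag: when you say ``consider an edge $e=xy_i$ with $1\le i\le k-1$ that has been recolored,'' the range should be $0\le i\le k-1$ --- the edge $xy_0$ changes from $\blank$ to $\phi(xy_1)$ under the shift and must also be checked for conflicts at $x$ and at $y_0$. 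Your multiset equality at $x$ and the inclusion $\phi(xy_1)\in M(\phi,y_0)$ already cover $i=0$, so the substance is there; it is only the stated index range that omits it.
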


The proof of Fact~\ref{fact:fan} is standard and so we omit it here.
We say the coloring $\psi$ is obtained from $\phi$ by \emphd{shifting} the fan $F$.
Such a shifting procedure is described in Fig.~\ref{fig:fan}.

\begin{figure}[htb!]
	\centering
	\begin{tikzpicture}[xscale=1.05]
	\begin{scope}
	\node[circle,fill=black,draw,inner sep=0pt,minimum size=4pt] (x) at (0,0) {};
	\node[anchor=north] at (x) {$x$};
	
	\coordinate (O) at (0,0);
	\def\radius{2.6cm}
	
	\node[circle,fill=black,draw,inner sep=0pt,minimum size=4pt] (y0) at (190:\radius) {};
	\node at (190:2.9) {$y_0$};
	
	\node[circle,fill=black,draw,inner sep=0pt,minimum size=4pt] (y1) at (165:\radius) {};
	\node at (165:2.9) {$y_1$};
	
	\node[circle,fill=black,draw,inner sep=0pt,minimum size=4pt] (y2) at (140:\radius) {};
	\node at (140:2.9) {$y_2$};
	
	%\node[circle,fill=black,draw,inner sep=0pt,minimum size=4pt] (y3) at (115:\radius) {};
	\node[circle,fill=black,draw,inner sep=0pt,minimum size=4pt] (y4) at (90:\radius) {};
	\node at (90:2.9) {$y_{i-1}$};
	
	\node[circle,fill=black,draw,inner sep=0pt,minimum size=4pt] (y5) at (65:\radius) {};
	\node at (65:2.9) {$y_i$};
	
	\node[circle,fill=black,draw,inner sep=0pt,minimum size=4pt] (y6) at (40:\radius) {};
	\node at (40:3) {$y_{i+1}$};
	
	%\node[circle,fill=black,draw,inner sep=0pt,minimum size=4pt] (y7) at (15:\radius) {};
	\node[circle,fill=black,draw,inner sep=0pt,minimum size=4pt] (y8) at (-10:\radius) {};
	\node at (-10:3.1) {$y_{k-1}$};
	
	\node[circle,inner sep=0pt,minimum size=4pt] at (115:2) {$\ldots$}; 
	\node[circle,inner sep=0pt,minimum size=4pt] at (15:2) {$\ldots$}; 
	
	\draw[thick,dotted] (x) to (y0);
	\draw[thick] (x) to node[midway,inner sep=1pt,outer sep=1pt,minimum size=4pt,fill=white] {$\alpha_0$} (y1);
	\draw[thick] (x) to node[midway,inner sep=1pt,outer sep=1pt,minimum size=4pt,fill=white] {$\alpha_1$} (y2);
	
	\draw[thick] (x) to node[midway,inner sep=1pt,outer sep=1pt,minimum size=4pt,fill=white] {$\alpha_{i-2}$} (y4);
	\draw[thick] (x) to node[pos=0.75,inner sep=1pt,outer sep=1pt,minimum size=4pt,fill=white] {$\alpha_{i-1}$} (y5);
	\draw[thick] (x) to node[midway,inner sep=1pt,outer sep=1pt,minimum size=4pt,fill=white] {$\alpha_i$} (y6);
	
	\draw[thick] (x) to node[midway,inner sep=1pt,outer sep=1pt,minimum size=4pt,fill=white] {$\alpha_{k-2}$} (y8);
	\end{scope}
	
	\draw[-{Stealth[length=1.6mm]},very thick,decoration = {snake,pre length=3pt,post length=7pt,},decorate] (2.9,1) -- (5,1);
	
	\begin{scope}[xshift=8.3cm]
	\node[circle,fill=black,draw,inner sep=0pt,minimum size=4pt] (x) at (0,0) {};
	\node[anchor=north] at (x) {$x$};
	
	\coordinate (O) at (0,0);
	\def\radius{2.6cm}
	
	\node[circle,fill=black,draw,inner sep=0pt,minimum size=4pt] (y0) at (190:\radius) {};
	\node at (190:2.9) {$y_0$};
	
	\node[circle,fill=black,draw,inner sep=0pt,minimum size=4pt] (y1) at (165:\radius) {};
	\node at (165:2.9) {$y_1$};
	
	\node[circle,fill=black,draw,inner sep=0pt,minimum size=4pt] (y2) at (140:\radius) {};
	\node at (140:2.9) {$y_2$};
	
	%\node[circle,fill=black,draw,inner sep=0pt,minimum size=4pt] (y3) at (115:\radius) {};
	\node[circle,fill=black,draw,inner sep=0pt,minimum size=4pt] (y4) at (90:\radius) {};
	\node at (90:2.9) {$y_{i-1}$};
	
	\node[circle,fill=black,draw,inner sep=0pt,minimum size=4pt] (y5) at (65:\radius) {};
	\node at (65:2.9) {$y_i$};
	
	\node[circle,fill=black,draw,inner sep=0pt,minimum size=4pt] (y6) at (40:\radius) {};
	\node at (40:3) {$y_{i+1}$};
	
	%\node[circle,fill=black,draw,inner sep=0pt,minimum size=4pt] (y7) at (15:\radius) {};
	\node[circle,fill=black,draw,inner sep=0pt,minimum size=4pt] (y8) at (-10:\radius) {};
	\node at (-10:3.1) {$y_{k-1}$};
	
	\node[circle,inner sep=0pt,minimum size=4pt] at (115:2) {$\ldots$}; 
	\node[circle,inner sep=0pt,minimum size=4pt] at (15:2) {$\ldots$}; 
	
	\draw[thick] (x) to node[midway,inner sep=1pt,outer sep=1pt,minimum size=4pt,fill=white] {$\alpha_0$} (y0);
	\draw[thick] (x) to node[midway,inner sep=1pt,outer sep=1pt,minimum size=4pt,fill=white] {$\alpha_1$} (y1);
	\draw[thick] (x) to node[midway,inner sep=1pt,outer sep=1pt,minimum size=4pt,fill=white] {$\alpha_2$} (y2);
	
	\draw[thick] (x) to node[midway,inner sep=1pt,outer sep=1pt,minimum size=4pt,fill=white] {$\alpha_{i-1}$} (y4);
	\draw[thick] (x) to node[pos=0.75,inner sep=1pt,outer sep=1pt,minimum size=4pt,fill=white] {$\alpha_i$} (y5);
	\draw[thick] (x) to node[midway,inner sep=1pt,outer sep=1pt,minimum size=4pt,fill=white] {$\alpha_{i+1}$} (y6);
	
	\draw[thick, dotted] (x) to (y8);
	\end{scope}
	\end{tikzpicture}
	\caption{The process of shifting a fan.}\label{fig:fan}
\end{figure}
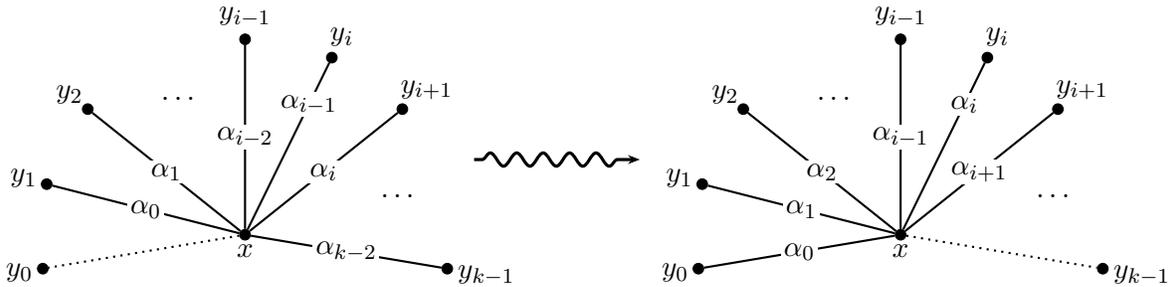

Next, we define alternating paths.

\begin{definition}[Alternating Paths]\label{defn:path}
    For $\alpha, \beta \in [q]$, an \emphd{$\alpha\beta$-path} $P = (x_0, \ldots, x_k)$ under a partial $q$-edge-coloring $\phi$ is a sequence $P = (x_0, \ldots, x_k)$ such that:
    \begin{itemize}
        \item $x_0$, \ldots, $x_k$ are distinct vertices,
        \item $\phi(x_0x_1) = \alpha$, and
        \item the colors of the edges $x_ix_{i+1}$ alternate between $\alpha$ and $\beta$.
    \end{itemize}
    We let $\vstart(P) \defeq x_0$ and $\vend(P) \defeq x_k$ denote the first and last vertices on the path, respectively.
    Finally, we let $\length(P) = k$.
\end{definition}

The following fact will be useful in our proofs.

\begin{fact}\label{fact:path}
    Let $\phi$ be a proper partial $q$-edge-coloring and let $P = (x_0, \ldots, x_k)$ be a maximal $\alpha\beta$-path under $\phi$, i.e., $M(\phi, v) \cap \set{\alpha, \beta} \neq \0$ for $v \in \set{x_0, x_k}$.
    Define the coloring $\psi$ by interchanging the colors of the edges in $P$.
    Then, $\psi$ is a proper partial $q$-edge-coloring.
\end{fact}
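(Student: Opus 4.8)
The plan is to show that $\psi$ is a proper partial $q$-edge-coloring by checking the properness condition separately at every vertex, exploiting the two facts that $\psi$ differs from $\phi$ \emph{only} on the edges of $P$ and that on those edges it merely swaps the two colors $\alpha$ and $\beta$. First observe that $\dom(\psi) = \dom(\phi)$: every edge of $P$ is colored under $\phi$ (its colors alternate between $\alpha$ and $\beta$), and interchanging $\alpha$ and $\beta$ leaves it colored. So it suffices to verify that for every vertex $v$ no two edges incident to $v$ receive the same color under $\psi$.

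Fix a vertex $v$ and split into three cases according to the position of $v$ relative to $P$. If $v \notin \{x_0, \ldots, x_k\}$, then no edge incident to $v$ lies on $P$, so $\psi$ and $\phi$ agree on all edges at $v$ and properness at $v$ is inherited from $\phi$. If $v = x_i$ with $0 < i < k$ is an internal vertex of $P$, then exactly two edges at $v$ lie on $P$, namely $x_{i-1}x_i$ and $x_ix_{i+1}$; under $\phi$ one of them has color $\alpha$ and the other $\beta$, and because $\phi$ is proper these are the only two edges at $v$ whose color lies in $\{\alpha, \beta\}$. Interchanging $\alpha$ and $\beta$ along $P$ simply swaps these two colors between the two path-edges, so under $\psi$ the vertex $v$ still has exactly one incident edge of color $\alpha$ and one of color $\beta$, with every other incident edge unchanged and still avoiding $\{\alpha, \beta\}$; hence properness holds at $v$.

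The remaining case is when $v$ is an endpoint of $P$, and this is the only place where the maximality hypothesis enters. Say $v = x_0$ (the argument at $v = x_k$ is analogous, with the roles of $\alpha$ and $\beta$ possibly interchanged depending on the parity of $k$). Exactly one edge of $P$ is incident to $x_0$, namely $x_0x_1$, which has color $\alpha$ under $\phi$ and color $\beta$ under $\psi$. Since $x_0x_1$ witnesses that $\alpha$ is used at $x_0$, we have $\alpha \notin M(\phi, x_0)$; combined with the maximality assumption $M(\phi, x_0) \cap \{\alpha, \beta\} \neq \0$, this forces $\beta \in M(\phi, x_0)$, i.e., no edge at $x_0$ has color $\beta$ under $\phi$. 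Therefore, after recoloring $x_0x_1$ to $\beta$ and leaving every other edge at $x_0$ untouched, $x_0x_1$ is the unique $\beta$-edge at $x_0$ and no conflict is created. Combining the three cases shows $\psi$ is proper. I do not anticipate a genuine obstacle here; the proof is a routine vertex-by-vertex case analysis, and the only point requiring care is the endpoint case, where one must remember that it is maximality of $P$ — not merely its alternating structure — that guarantees the terminal edges can safely be flipped.
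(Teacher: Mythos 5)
Your proof is correct and complete; the paper itself omits the argument, stating only that ``the proof of Fact~\ref{fact:path} is standard,'' and what you have written is precisely the standard vertex-by-vertex case analysis one would expect. You correctly isolate the three cases (vertex off $P$, internal vertex, endpoint), correctly observe that properness of $\phi$ forces each internal $x_i$ to have exactly one incident $\alpha$-edge and one incident $\beta$-edge (both on $P$), and correctly identify that the maximality hypothesis is exactly what is needed at the endpoints: since $x_0x_1$ already uses $\alpha$ at $x_0$, maximality forces $\beta \in M(\phi,x_0)$, so recoloring $x_0x_1$ to $\beta$ cannot collide with any edge left untouched, and symmetrically at $x_k$ with the appropriate color by parity. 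The degenerate cases $k=0$ (nothing changes) and $k=1$ (both vertices are endpoints) are also covered by your case split.
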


The proof of Fact~\ref{fact:path} is standard and so we omit it here.
We say the coloring $\psi$ is obtained from $\phi$ by \emphd{flipping} the path $P$.
Such a flipping procedure is described in Fig.~\ref{fig:path}.

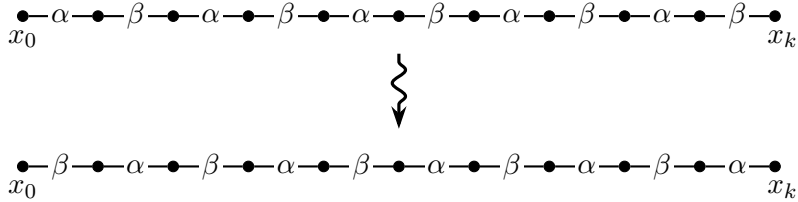
\begin{figure}[htb!]
    \centering
        \begin{tikzpicture}
            \node[circle,fill=black,draw,inner sep=0pt,minimum size=4pt] (a) at (0,0) {};
        	\path (a) ++(0:1) node[circle,fill=black,draw,inner sep=0pt,minimum size=4pt] (b) {};
        	\path (b) ++(0:1) node[circle,fill=black,draw,inner sep=0pt,minimum size=4pt] (c) {};
                \path (c) ++(0:1) node[circle,fill=black,draw,inner sep=0pt,minimum size=4pt] (d) {};
                \path (d) ++(0:1) node[circle,fill=black,draw,inner sep=0pt,minimum size=4pt] (e) {};
                \path (e) ++(0:1) node[circle,fill=black,draw,inner sep=0pt,minimum size=4pt] (f) {};
                \path (f) ++(0:1) node[circle,fill=black,draw,inner sep=0pt,minimum size=4pt] (g) {};
                \path (g) ++(0:1) node[circle,fill=black,draw,inner sep=0pt,minimum size=4pt] (h) {};
                \path (h) ++(0:1) node[circle,fill=black,draw,inner sep=0pt,minimum size=4pt] (i) {};
                \path (i) ++(0:1) node[circle,fill=black,draw,inner sep=0pt,minimum size=4pt] (j) {};
                \path (j) ++(0:1) node[circle,fill=black,draw,inner sep=0pt,minimum size=4pt] (k) {};
        	
        	% \draw[thick, decorate,decoration=zigzag] (e) to (f);

            \node at (0, -0.3) {$x_0$};
            \node at (10.1, -0.3) {$x_k$};
        	
        	\draw[thick] (a) to node[midway,inner sep=1pt,outer sep=1pt,minimum size=4pt,fill=white] {$\beta$} (b) to node[midway,inner sep=1pt,outer sep=1pt,minimum size=4pt,fill=white] {$\alpha$} (c) to node[midway,inner sep=1pt,outer sep=1pt,minimum size=4pt,fill=white] {$\beta$} (d) to node[midway,inner sep=1pt,outer sep=1pt,minimum size=4pt,fill=white] {$\alpha$} (e) to node[midway,inner sep=1pt,outer sep=1pt,minimum size=4pt,fill=white] {$\beta$} (f) to node[midway,inner sep=1pt,outer sep=1pt,minimum size=4pt,fill=white] {$\alpha$} (g) to node[midway,inner sep=1pt,outer sep=1pt,minimum size=4pt,fill=white] {$\beta$} (h) to node[midway,inner sep=1pt,outer sep=1pt,minimum size=4pt,fill=white] {$\alpha$} (i) to node[midway,inner sep=1pt,outer sep=1pt,minimum size=4pt,fill=white] {$\beta$} (j) to node[midway,inner sep=1pt,outer sep=1pt,minimum size=4pt,fill=white] {$\alpha$} (k);

        \begin{scope}[yshift=2cm]
            \node[circle,fill=black,draw,inner sep=0pt,minimum size=4pt] (a) at (0,0) {};
        	\path (a) ++(0:1) node[circle,fill=black,draw,inner sep=0pt,minimum size=4pt] (b) {};
        	\path (b) ++(0:1) node[circle,fill=black,draw,inner sep=0pt,minimum size=4pt] (c) {};
                \path (c) ++(0:1) node[circle,fill=black,draw,inner sep=0pt,minimum size=4pt] (d) {};
                \path (d) ++(0:1) node[circle,fill=black,draw,inner sep=0pt,minimum size=4pt] (e) {};
                \path (e) ++(0:1) node[circle,fill=black,draw,inner sep=0pt,minimum size=4pt] (f) {};
                \path (f) ++(0:1) node[circle,fill=black,draw,inner sep=0pt,minimum size=4pt] (g) {};
                \path (g) ++(0:1) node[circle,fill=black,draw,inner sep=0pt,minimum size=4pt] (h) {};
                \path (h) ++(0:1) node[circle,fill=black,draw,inner sep=0pt,minimum size=4pt] (i) {};
                \path (i) ++(0:1) node[circle,fill=black,draw,inner sep=0pt,minimum size=4pt] (j) {};
                \path (j) ++(0:1) node[circle,fill=black,draw,inner sep=0pt,minimum size=4pt] (k) {};
        	
        	% \draw[thick, decorate,decoration=zigzag] (e) to (f);

            \node at (0, -0.3) {$x_0$};
            \node at (10.1, -0.3) {$x_k$};
        	
        	\draw[thick] (a) to node[midway,inner sep=1pt,outer sep=1pt,minimum size=4pt,fill=white] {$\alpha$} (b) to node[midway,inner sep=1pt,outer sep=1pt,minimum size=4pt,fill=white] {$\beta$} (c) to node[midway,inner sep=1pt,outer sep=1pt,minimum size=4pt,fill=white] {$\alpha$} (d) to node[midway,inner sep=1pt,outer sep=1pt,minimum size=4pt,fill=white] {$\beta$} (e) to node[midway,inner sep=1pt,outer sep=1pt,minimum size=4pt,fill=white] {$\alpha$} (f) to node[midway,inner sep=1pt,outer sep=1pt,minimum size=4pt,fill=white] {$\beta$} (g) to node[midway,inner sep=1pt,outer sep=1pt,minimum size=4pt,fill=white] {$\alpha$} (h) to node[midway,inner sep=1pt,outer sep=1pt,minimum size=4pt,fill=white] {$\beta$} (i) to node[midway,inner sep=1pt,outer sep=1pt,minimum size=4pt,fill=white] {$\alpha$} (j) to node[midway,inner sep=1pt,outer sep=1pt,minimum size=4pt,fill=white] {$\beta$} (k);
        \end{scope}

        \begin{scope}[yshift=1.5cm]
            \draw[-{Stealth[length=3mm,width=2mm]},very thick,decoration = {snake,pre length=3pt,post length=7pt,},decorate] (5,0) -- (5,-1);
        \end{scope}
        	
        \end{tikzpicture}
    \caption{The process of flipping an $\alpha\beta$-path.}
    \label{fig:path}
\end{figure}

Combining a fan and an alternating path yields a Vizing chain.

\begin{definition}[Vizing chains]\label{defn:viz}
    A \emphd{Vizing chain} in a partial $q$-edge-coloring $\phi$ is a tuple $(F, P)$ such that $F = (x, y_0, \ldots, y_{k-1})$ is a fan and $P = (x_0 = x, x_1, \ldots, x_s)$ is an $\alpha\beta$-path for some $\alpha, \beta \in [q]$.
\end{definition}

A key part of our algorithm will involve constructing ``short'' Vizing chains $(F, P)$ and modifying our coloring by flipping $P$ and shifting $F$ to create a happy edge.
Such a procedure is described in Fig.~\ref{fig:vizing_shift}.

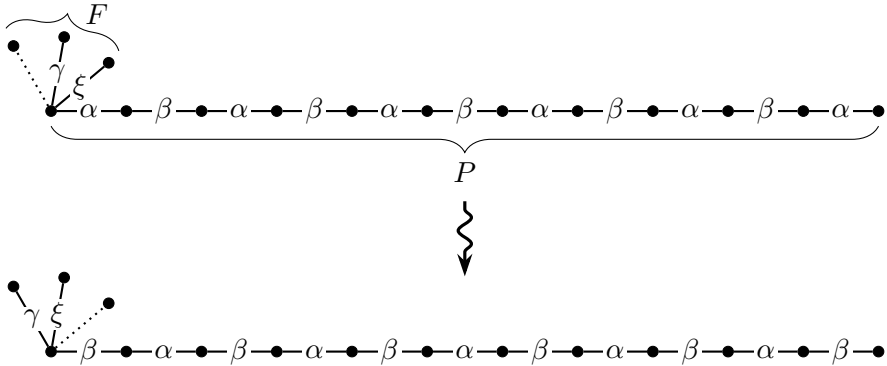
\begin{figure}[htb!]
    \centering
        \begin{tikzpicture}
            \node[circle,fill=black,draw,inner sep=0pt,minimum size=4pt] (a) at (0,0) {};
        	\path (a) ++(0:1) node[circle,fill=black,draw,inner sep=0pt,minimum size=4pt] (b) {};
        	\path (b) ++(0:1) node[circle,fill=black,draw,inner sep=0pt,minimum size=4pt] (c) {};
                \path (c) ++(0:1) node[circle,fill=black,draw,inner sep=0pt,minimum size=4pt] (d) {};
                \path (d) ++(0:1) node[circle,fill=black,draw,inner sep=0pt,minimum size=4pt] (e) {};
                \path (e) ++(0:1) node[circle,fill=black,draw,inner sep=0pt,minimum size=4pt] (l) {};
                \path (l) ++(0:1) node[circle,fill=black,draw,inner sep=0pt,minimum size=4pt] (m) {};
                \path (m) ++(0:1) node[circle,fill=black,draw,inner sep=0pt,minimum size=4pt] (n) {};
                \path (n) ++(0:1) node[circle,fill=black,draw,inner sep=0pt,minimum size=4pt] (o) {};
                \path (o) ++(0:1) node[circle,fill=black,draw,inner sep=0pt,minimum size=4pt] (f) {};
                \path (f) ++(0:1) node[circle,fill=black,draw,inner sep=0pt,minimum size=4pt] (g) {};
                \path (g) ++(0:1) node[circle,fill=black,draw,inner sep=0pt,minimum size=4pt] (h) {};

                \path (a) ++(40:1) node[circle,fill=black,draw,inner sep=0pt,minimum size=4pt] (i) {};
                \path (a) ++(80:1) node[circle,fill=black,draw,inner sep=0pt,minimum size=4pt] (j) {};
                \path (a) ++(120:1) node[circle,fill=black,draw,inner sep=0pt,minimum size=4pt] (k) {};
        	
                \draw[thick, dotted] (a) -- (i);
                
                \draw[thick] (a) to node[midway,inner sep=1pt,outer sep=1pt,minimum size=4pt,fill=white] {$\xi$} (j) (a) to node[midway,inner sep=1pt,outer sep=1pt,minimum size=4pt,fill=white] {$\gamma$} (k);
         
        	\draw[thick] (a) to node[midway,inner sep=1pt,outer sep=1pt,minimum size=4pt,fill=white] {$\beta$} (b) to node[midway,inner sep=1pt,outer sep=1pt,minimum size=4pt,fill=white] {$\alpha$} (c) to node[midway,inner sep=1pt,outer sep=1pt,minimum size=4pt,fill=white] {$\beta$} (d) to node[midway,inner sep=1pt,outer sep=1pt,minimum size=4pt,fill=white] {$\alpha$} (e) to node[midway,inner sep=1pt,outer sep=1pt,minimum size=4pt,fill=white] {$\beta$} (l) to node[midway,inner sep=1pt,outer sep=1pt,minimum size=4pt,fill=white] {$\alpha$} (m) to node[midway,inner sep=1pt,outer sep=1pt,minimum size=4pt,fill=white] {$\beta$} (n) to node[midway,inner sep=1pt,outer sep=1pt,minimum size=4pt,fill=white] {$\alpha$} (o) to node[midway,inner sep=1pt,outer sep=1pt,minimum size=4pt,fill=white] {$\beta$} (f) to node[midway,inner sep=1pt,outer sep=1pt,minimum size=4pt,fill=white] {$\alpha$} (g) to node[midway,inner sep=1pt,outer sep=1pt,minimum size=4pt,fill=white] {$\beta$} (h);

        \begin{scope}[yshift=3.2cm]
            \node[circle,fill=black,draw,inner sep=0pt,minimum size=4pt] (a) at (0,0) {};
        	\path (a) ++(0:1) node[circle,fill=black,draw,inner sep=0pt,minimum size=4pt] (b) {};
        	\path (b) ++(0:1) node[circle,fill=black,draw,inner sep=0pt,minimum size=4pt] (c) {};
                \path (c) ++(0:1) node[circle,fill=black,draw,inner sep=0pt,minimum size=4pt] (d) {};
                \path (d) ++(0:1) node[circle,fill=black,draw,inner sep=0pt,minimum size=4pt] (e) {};
                \path (e) ++(0:1) node[circle,fill=black,draw,inner sep=0pt,minimum size=4pt] (l) {};
                \path (l) ++(0:1) node[circle,fill=black,draw,inner sep=0pt,minimum size=4pt] (m) {};
                \path (m) ++(0:1) node[circle,fill=black,draw,inner sep=0pt,minimum size=4pt] (n) {};
                \path (n) ++(0:1) node[circle,fill=black,draw,inner sep=0pt,minimum size=4pt] (o) {};
                \path (o) ++(0:1) node[circle,fill=black,draw,inner sep=0pt,minimum size=4pt] (f) {};
                \path (f) ++(0:1) node[circle,fill=black,draw,inner sep=0pt,minimum size=4pt] (g) {};
                \path (g) ++(0:1) node[circle,fill=black,draw,inner sep=0pt,minimum size=4pt] (h) {};

                \path (a) ++(40:1) node[circle,fill=black,draw,inner sep=0pt,minimum size=4pt] (i) {};
                \path (a) ++(80:1) node[circle,fill=black,draw,inner sep=0pt,minimum size=4pt] (j) {};
                \path (a) ++(120:1) node[circle,fill=black,draw,inner sep=0pt,minimum size=4pt] (k) {};

                \draw[thick] (a) to node[midway,inner sep=1pt,outer sep=1pt,minimum size=4pt,fill=white] {$\xi$} (i) (a) to node[midway,inner sep=1pt,outer sep=1pt,minimum size=4pt,fill=white] {$\gamma$} (j);

                \draw[thick, dotted] (a) -- (k);
         
        	\draw[thick] (a) to node[midway,inner sep=1pt,outer sep=1pt,minimum size=4pt,fill=white] {$\alpha$} (b) to node[midway,inner sep=1pt,outer sep=1pt,minimum size=4pt,fill=white] {$\beta$} (c) to node[midway,inner sep=1pt,outer sep=1pt,minimum size=4pt,fill=white] {$\alpha$} (d) to node[midway,inner sep=1pt,outer sep=1pt,minimum size=4pt,fill=white] {$\beta$} (e) to node[midway,inner sep=1pt,outer sep=1pt,minimum size=4pt,fill=white] {$\alpha$} (l) to node[midway,inner sep=1pt,outer sep=1pt,minimum size=4pt,fill=white] {$\beta$} (m) to node[midway,inner sep=1pt,outer sep=1pt,minimum size=4pt,fill=white] {$\alpha$} (n) to node[midway,inner sep=1pt,outer sep=1pt,minimum size=4pt,fill=white] {$\beta$} (o) to node[midway,inner sep=1pt,outer sep=1pt,minimum size=4pt,fill=white] {$\alpha$} (f) to node[midway,inner sep=1pt,outer sep=1pt,minimum size=4pt,fill=white] {$\beta$} (g) to node[midway,inner sep=1pt,outer sep=1pt,minimum size=4pt,fill=white] {$\alpha$} (h);

            \draw[decoration={brace,amplitude=10pt,mirror}, decorate] (0, -0.2) -- node [midway,below,xshift=0pt,yshift=-10pt] {$P$} (11,-0.2);
            \draw[decoration={brace,amplitude=10pt},decorate] (-0.6,1.1) -- node [midway,above,yshift=2pt,xshift=13pt] {$F$} (0.9, 0.8);
            
        \end{scope}

        \begin{scope}[yshift=2.5cm]
            \draw[-{Stealth[length=3mm,width=2mm]},very thick,decoration = {snake,pre length=3pt,post length=7pt,},decorate] (5.5,-0.5) -- (5.5,-1.5);
        \end{scope}
        	
        \end{tikzpicture}
    \caption{The process of flipping $P$ and then shifting $F$ in a Vizing chain $(F, P)$.}
    \label{fig:vizing_shift}
\end{figure}

\subsection{Data Structures}\label{subsec:data_structures}
    
In this section, we describe how we will store our graph $G$ and partial coloring $\phi$.
Below, we discuss how our choices for the data structures affect the runtime of certain procedures.

\begin{itemize}
    \item We store $G$ as a list of vertices and edges, and include the partial coloring $\phi$ as an attribute of the graph.

    \item We store the partial coloring $\phi$ as a hash map, which maps edges to their respective colors (or $\blank$ if the edge is uncolored).
    Furthermore, the missing sets $M(\phi, \cdot)$ are also stored as hash maps, which map a vertex $x$ to a $q$-element array such that the following holds for each $\alpha \in [q]$:
    \[M(\phi, x)[\alpha] = \left\{\begin{array}{cc}
        y & \text{such that $\phi(xy) = \alpha$;} \\
        \blank & \text{if no such $y$ exists.}
    \end{array}\right.\]
    Note that as $\phi$ is a proper partial coloring, the vertex $y$ above is unique.
    In the remainder of the paper, we will use the notation $M(\phi, x)[\cdot]$ as described above in our algorithms, and the notation $M(\phi, x)$ to indicate the set of missing colors at $x$ in our proofs.
\end{itemize}
By our choice of data structures, we may shift a fan $F$ (resp. flip a path $P$) in time $O(\length(F))$ (resp. $O(\length(P))$).

\subsection{A Folklore $(2+\eps)\Delta$-Edge-Coloring Algorithm}\label{subsec:stage two}

In this section, we describe the folklore algorithm for $(2+\eps)\Delta$-edge-coloring, which we will invoke during Stage 2 of our main algorithm.
The algorithm proceeds as follows: iterate over the edges in an arbitrary order and at each iteration, repeatedly pick a color from $[(2+\eps)\Delta]$ uniformly at random until a valid one is selected (a color $\alpha$ is valid for an edge $e$ if no edge sharing an endpoint with $e$ is colored $\alpha$).

\vspace{5pt}
\begin{breakablealgorithm}\small
\caption{$(2+\eps)\Delta$-Edge-Coloring}\label{alg:greedy}
\begin{flushleft}
\textbf{Input}: An $n$-vertex graph $G$ having $m$ edges and maximum degree $\Delta$. \\
% \medskip
\textbf{Output}: A proper $(2 + \epsilon)\Delta$-edge-coloring of $G$.
\end{flushleft}

\begin{algorithmic}[1]
    \State $\phi(e) \gets \blank$ \textbf{for each} $e \in E$
    \For{$e = xy \in E$}
        \State \label{step:random color choice} Pick $\alpha \in [(2+\eps)\Delta]$ uniformly at random.
        \If{$M(\phi, x)[\alpha] = \blank$ and $M(\phi, y)[\alpha] = \blank$}
            \State $\phi(e) \gets \alpha$
        \Else
            \State Return to Step~\ref{step:random color choice}.
        \EndIf
    \EndFor
    \State \Return $\phi$
\end{algorithmic}
\end{breakablealgorithm}
\vspace{5pt}

The following proposition provides a bound on the runtime of the above algorithm.
The proof follows a standard coupling argument.
For completeness, we include such a proof in \S\ref{appendix}.

\begin{proposition}\label{prop:greedy}
    Let $\eps > 0$ be arbitrary and let $\gamma \defeq \min\set{1,\, \eps}$.
    For any $n$-vertex graph $G$ with $m$ edges and maximum degree $\Delta$,
    Algorithm~\ref{alg:greedy} computes a proper $(2+\eps)\Delta$-edge-coloring of $G$ in $O(\max\set{m/\gamma^2,\,\log^2n/\gamma})$ time with probability at least $1 - \min\set{e^{-m/\gamma}, 1/\poly(n)}$.
\end{proposition}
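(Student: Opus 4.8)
The plan is to analyze Algorithm~\ref{alg:greedy} one edge at a time via a coupling argument. Fix an iteration in which we try to color edge $e = xy$. Since $\phi$ is a proper partial coloring, the colors forbidden for $e$ are exactly those appearing on edges incident to $x$ or $y$; there are at most $2(\Delta - 1)$ such colors out of a palette of size $(2+\eps)\Delta$. Hence in each attempt at Step~\ref{step:random color choice}, the probability of picking a valid color is at least $\frac{(2+\eps)\Delta - 2(\Delta-1)}{(2+\eps)\Delta} \geq \frac{\eps\Delta}{(2+\eps)\Delta} = \frac{\eps}{2+\eps} \geq \frac{\gamma}{3}$, where $\gamma = \min\set{1,\eps}$. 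Thus the number of attempts for a fixed edge is stochastically dominated by a geometric random variable with success probability $p \geq \gamma/3$, independently across edges (the bound holds conditionally on all prior randomness, so the total number of attempts $T$ over all $m$ edges is dominated by a sum of $m$ i.i.d.\ $\mathrm{Geom}(\gamma/3)$ variables).

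First I would record that $\E[T] \leq 3m/\gamma$ and then split into two regimes according to which term dominates $\max\set{m/\gamma^2, \log^2 n/\gamma}$. In the regime where $m/\gamma \geq \log^2 n$ (equivalently $m/\gamma^2 \geq \log^2 n / \gamma$), I would apply a Chernoff-type tail bound for sums of geometric random variables to conclude that $T = O(m/\gamma)$ — wait, we need $O(m/\gamma^2)$, so actually the cleaner statement is $T \leq 6m/\gamma \leq 6m/\gamma^2$ with probability $\geq 1 - e^{-cm/\gamma}$ for some constant $c$; since each attempt costs $O(1)$ time (a constant number of hash-map lookups using the $M(\phi,\cdot)[\cdot]$ arrays), this gives runtime $O(m/\gamma^2)$ with the claimed failure probability $e^{-m/\gamma}$. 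In the sparse regime where $m/\gamma < \log^2 n$, the sum of $m$ geometrics concentrates less sharply around its mean, so instead I would bound $T$ by $O(\log^2 n/\gamma)$ directly: the probability that any single edge requires more than $k$ attempts is at most $(1-\gamma/3)^k$, and a union bound over the $m \leq n^2$ edges shows that with probability $\geq 1 - 1/\poly(n)$ every edge needs $O(\log n/\gamma)$ attempts, giving $T = O(m\log n/\gamma) = O(\log^3 n/\gamma)$; tightening this, one uses that the total is a sum and applies the geometric Chernoff bound with deviation scaled to $\log^2 n/\gamma$ rather than to the mean. Combining the two regimes yields runtime $O(\max\set{m/\gamma^2, \log^2 n/\gamma})$ with probability at least $1 - \min\set{e^{-m/\gamma}, 1/\poly(n)}$.

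Correctness of the output is immediate: the algorithm only assigns a color $\alpha$ to $e = xy$ when $M(\phi,x)[\alpha] = M(\phi,y)[\alpha] = \blank$, i.e.\ when $\alpha$ is missing at both endpoints, so propriety of $\phi$ is preserved as an invariant throughout; and the \texttt{for} loop terminates only once every edge has been assigned a color, so the returned $\phi$ is a proper $(2+\eps)\Delta$-edge-coloring. The only thing to be careful about is that the algorithm could in principle loop forever at Step~\ref{step:random color choice} — but since at least one valid color always exists (there are at most $2\Delta - 2 < (2+\eps)\Delta$ forbidden colors), each inner loop terminates almost surely, and the runtime bound above quantifies this.

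The main obstacle I anticipate is getting the two-regime split to produce exactly the stated bound with exactly the stated failure probability, rather than something off by logarithmic factors. The subtlety is that a sum of $m$ geometric random variables with small success probability $p = \Theta(\gamma)$ has mean $\Theta(m/\gamma)$ but its fluctuations are only well-controlled (multiplicatively) when the mean is at least polylogarithmic; when $m$ is tiny the right bound on the sum is additive in $\log n$, which is why the $\max$ appears. The standard tool is the tail inequality: for $T = \sum_{i=1}^m G_i$ with $G_i \sim \mathrm{Geom}(p)$ i.i.d., $\Pr[T \geq t] \leq e^{-p t / 8}$ whenever $t \geq 2m/p$ (or a similar constant); applying this with $t = \Theta(\max\set{m/\gamma, \log^2 n/\gamma})$ — note $m/\gamma^2 = (m/\gamma)/\gamma \geq m/\gamma$ — and checking the threshold condition holds in each regime delivers the result. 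I would present this tail bound as a cited fact (it is folklore, e.g.\ via the moment generating function of the geometric distribution) rather than reprove it.
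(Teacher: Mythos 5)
Your overall strategy matches the paper's: lower-bound the per-attempt success probability by $\gamma/3$, dominate the iteration count by a sum of i.i.d.\ geometric random variables, and apply a concentration inequality for such sums (the paper invokes a specific tail bound from an earlier work when $m = \Omega(\log n)$ and a union bound when $m = O(\log n)$, whereas you propose using a single folklore geometric tail bound in both regimes, which is a mild simplification once calibrated correctly). However, there is a genuine quantitative gap in your dense-regime analysis. You apply the tail bound $\Pr[T \geq t] \leq e^{-pt/8}$ (valid for $t \geq 2m/p$) at $t = \Theta(m/\gamma)$. With $p = \gamma/3$ this yields $pt = \Theta(m)$ and hence failure probability $e^{-\Theta(m)}$, but the proposition claims failure at most $\min\{e^{-m/\gamma}, 1/\poly(n)\}$. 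Since $\gamma \leq 1$, $e^{-m/\gamma} \leq e^{-m}$, so $e^{-\Theta(m)}$ is strictly \emph{weaker} than what is claimed when $\gamma < 1$; you have not established the stated bound. Your line ``$T \leq 6m/\gamma \leq 6m/\gamma^2$ with probability $\geq 1 - e^{-cm/\gamma}$'' is internally inconsistent for exactly this reason: the event $T \leq 6m/\gamma$ has probability only $\geq 1 - e^{-\Omega(m)}$, and weakening $6m/\gamma$ to $6m/\gamma^2$ after the fact does not improve the exponent.

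The fix is to apply the tail bound at $t = \Theta\bigl(\max\{m/\gamma^2,\, \log^2 n/\gamma\}\bigr)$, i.e., exactly the runtime you are trying to certify, rather than the smaller $\Theta\bigl(\max\{m/\gamma,\, \log^2 n/\gamma\}\bigr)$. The threshold condition $t \geq 2m/p = 6m/\gamma$ holds because $m/\gamma^2 \geq m/\gamma$. Then $pt = \Theta\bigl(\max\{m/\gamma,\, \log^2 n\}\bigr)$, so the failure probability is $e^{-\Omega(\max\{m/\gamma,\,\log^2 n\})} \leq \min\{e^{-\Omega(m/\gamma)},\, n^{-\Omega(\log n)}\}$, matching the claim. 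With this correction your argument is complete, and the one-shot tail-bound application is arguably cleaner than the paper's two-case split. The rest of your proposal --- the lower bound $|M(\phi,x)\cap M(\phi,y)| \geq \eps\Delta + 2$, the stochastic-domination setup, and the observation that propriety of $\phi$ is a loop invariant --- is correct and agrees with the paper.
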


\section{Proof of Theorem~\ref{theo:main_theo}}\label{sec:alg}

In this section, we will prove Theorem~\ref{theo:main_theo}.
We will split this section into three subsections.
In the first, we will describe our algorithm formally;
in the second subsection, we will prove the correctness of our algorithm;
and in the third, we will analyze the probability of failure.

\subsection{Algorithm Overview}\label{subsec:overview}

In this section, we will provide an overview of our algorithm (formally stated as Algorithm~\ref{alg:color}).
As mentioned in \S\ref{subsection: overview}, the algorithm will proceed in two stages.
We will split the algorithm for Stage 1 into subprocedures.

The first subprocedure we will describe is \hyperref[alg:fan]{Make Fan}, which is stated formally as Algorithm~\ref{alg:fan}. 
The algorithm takes as input a proper partial coloring $\phi$, an uncolored edge $e = xy$, a choice of a pivot vertex $x \in e$, and a subset of colors $C \subseteq [q]$.
Either the algorithm fails, or it outputs a tuple $(F, \alpha, j)$ such that:
\begin{itemize}
    \item $F = (x, y_0 = y, \ldots, y_{k-1})$ is a fan under the partial coloring $\phi$,
    \item $\alpha \in C$ is a color, and $j \leq k$ is an index such that $\alpha \in M(\phi, y_{k-1})\cap M(\phi, y_{j-1})$.
\end{itemize}
To construct $F$, we follow a series of iterations. At the start of each iteration, we have a fan $F = (x, y_0 = y, \ldots, y_s)$.
If $M(\phi, y_s) \cap C = \0$, the algorithm fails.
If not, we let $\eta \defeq \min M(\phi, y_s) \cap C$.
If $\eta \in M(\phi, x)$, then we return $(F, \eta, s+1)$.
If not, let $z$ be such that $\phi(xz) = \eta$. 
We now have two cases.
\begin{enumerate}[label=\ep{\textbf{Case \arabic*}},wide]
    \item $z\notin \set{y_0, \ldots, y_s}$. Then we update $F$ to $(x, y_0, \ldots, y_s, z)$ and continue.
    \item $z = y_j$ for some $0 \leq j \leq s$. 
    Note that $\phi(xy_0) = \blank$ and $\eta \in M(\phi, y_s)$, so we must have $1 \leq j \leq s - 1$.
    In this case, we return $(F, \eta, j)$.
\end{enumerate}

% \vspace{2pt}
\begin{breakablealgorithm}\small
\caption{Make Fan}\label{alg:fan}
\begin{flushleft}
\textbf{Input}: A proper partial $q$-edge-coloring $\phi$, an uncolored edge $e = xy$, a vertex $x \in e$, and list of colors $C \subseteq [q]$. \\
% \medskip
\textbf{Output}: $\mathsf{FAIL}$ or a fan $F = (x, y_0 = y, \ldots, y_{k-1})$, a color $\alpha \in C$, and an index $j$ such that $\alpha \in M(\phi, y_{k-1})\cap M(\phi, y_{j-1})$.
\end{flushleft}

\begin{algorithmic}[1]
    \State $F \gets (x,y)$, \quad $z \gets y$, \quad $k \gets 1$.
    \While{$\mathsf{true}$}
        \If{$C \cap M(\phi, z) = \0$} \label{step:check_missing}
            \State \Return $\mathsf{FAIL}$ \label{step:fail_fan}
        \EndIf
        \medskip
        \State $\eta \gets \min M(\phi, z) \cap C$ \label{step:assign_eta}
        \State $z \gets M(\phi, x)[\eta]$
        \If{$z = \blank$}
            \State \Return $(F, \eta, k)$ \label{step:happy_fan}
        \EndIf
        \medskip
        \If{$z \in F$}\label{step:check_fan}
            \State Let $j$ be such that $z = y_j$.
            \State \Return $(F, \eta, j)$
        \EndIf
        \medskip
        \State $y_k \gets z$
        \State $\mathsf{Append}(F, y_k)$, \quad $k \gets k + 1$
    \EndWhile \label{step:end}
\end{algorithmic}
\end{breakablealgorithm}
\vspace{5pt}

Note that Steps~\ref{step:check_missing} and \ref{step:assign_eta} can be implemented in $O(|C|)$ time and Step~\ref{step:check_fan} takes $O(k)$ time.
It is not hard to see that the number of iterations is at most $|C|$ and so $k \leq |C| + 1$.
It follows that Algorithm~\ref{alg:fan} runs in $O(|C|^2)$ time.

The next algorithm we will describe is the \hyperref[alg:viz]{Vizing Chain Algorithm}, which is stated formally as Algorithm~\ref{alg:viz}. 
The algorithm takes as input a partial coloring $\phi$, an uncolored edge $e = xy$, a choice of a pivot vertex $x \in e$, a subset of colors $C \subseteq [q]$, and a parameter $\ell$ to be defined later.
Either the algorithm fails, or it outputs a Vizing chain $(F, P)$ and a color $\alpha$, which satisfy certain properties we will describe and prove in the next subsection.
First, we run Algorithm~\ref{alg:fan} with input $(\phi, e, x, C)$.
If the algorithm fails, we return $\mathsf{FAIL}$.
If not, let $(F, \alpha, j)$ be the output of Algorithm~\ref{alg:fan} such that $F = (x, y_0, \ldots, y_{k-1})$.
If $j = k$, we return the Vizing chain $(F,(x))$ and the color $\alpha$.
If $M(\phi, x) \cap C = \0$, we return $\mathsf{FAIL}$.
If not, let $\beta \defeq \min M(\phi, x) \cap C$, and let $P = (x_0 = x, x_1, \ldots, x_s)$ be the $\alpha\beta$-path starting at $x$ such that $s \leq \ell$ (i.e., if the maximal $\alpha\beta$-path starting at $x$ has length $> \ell$, we only consider the first $\ell$ edges).
We return the Vizing chain $(F,P)$ and the color $\alpha$.

\vspace{5pt}
\begin{breakablealgorithm}\small
\caption{Vizing Chain Algorithm}\label{alg:viz}
\begin{flushleft}
\textbf{Input}: A proper partial coloring $\phi$, an uncolored edge $e = xy$, a choice of a pivot vertex $x \in e$, a subset of colors $C \subseteq [(1+\epsilon)\Delta]$, and a parameter $\ell$. \\
% \medskip
\textbf{Output}: $\mathsf{FAIL}$ or a Vizing chain $(F,P)$ and a color $\alpha$.
\end{flushleft}

\begin{algorithmic}[1]
    \State $\mathsf{Out} \gets \hyperref[alg:fan]{\mathsf{MakeFan}}(\phi, xy, x, C)$ \Comment{Algorithm~\ref{alg:fan}} \label{step:alg_fan}
    \If{$\mathsf{Out} = \mathsf{FAIL}$}
        \State \Return $\mathsf{FAIL}$
    \EndIf
    \medskip
    \State $(F, \alpha, j) \gets \mathsf{Out}$
    \If{$j = \length(F)$} 
        \State \Return $(F, (x)), \alpha$ \label{step:happy_fan_viz}
    \EndIf
    \medskip
    \If{$M(\phi, x) \cap C  = \0$}
        \State \Return $\mathsf{FAIL}$ \label{step:fail_x_alpha}
    \EndIf
    \medskip
    \State $\beta \gets \min M(\phi, x) \cap C$ \label{step:alpha}
    \State Let $P$ be the $\alpha\beta$-path of length at most $\ell$ starting at $x$ \label{step:path_def}
    \State \Return $(F, P), \alpha$.
\end{algorithmic}
\end{breakablealgorithm}
\vspace{5pt}

Note that we only compute a path of length at most $\ell$ at Step~\ref{step:path_def}.
Due to the structure of $M(\phi, \cdot)$, we conclude that we may implement Algorithm~\ref{alg:viz} in $O(|C|^2 + \ell)$ time.

We are now ready to describe our algorithm.
The formal description is in Algorithm~\ref{alg:color}, but we first provide an informal overview.
The algorithm takes as input a graph $G$ having maximum degree $\Delta$ and two parameters $\kappa$ and $\ell$, and outputs a proper $(1+\eps)\Delta$-edge-coloring of $G$.
As mentioned earlier, the algorithm proceeds in two stages.

During the first stage, we will construct a partial $(1 + \epsilon/2)\Delta$-edge-coloring of $G$.
We compute this coloring iteratively.
We begin by considering all edges to be unflagged as well as uncolored.
Let $\phi_i$ be the coloring at the start of the $i$-th iteration.
We will first select an unflagged and uncolored edge $e$ and a vertex $x \in e$ uniformly at random.
Next, we define $C \subseteq [(1 + \epsilon/2)\Delta]$ by sampling with replacement $\kappa$ times from $[(1 + \epsilon/2)\Delta]$.
We will run Algorithm~\ref{alg:viz} with input $(\phi_i, e, x, C, \ell)$.
If the output is $\mathsf{FAIL}$, we return $\mathsf{FAIL}$.
If not, let $((F, P), \alpha)$ be the output such that $F = (x, y_0, \ldots, y_{k-1})$ and $P = (x_0 = x, x_1, \ldots, x_s)$.
If $s = 0$, then define $\phi_{i+1}$ by shifting $\phi_i$ with $F$ and coloring $xy_{k-1}$ with $\alpha$, and continue on to the next iteration.
If $s = \ell$, let $\ell' \in [\ell]$ be chosen uniformly at random. 
Uncolor and flag the edge $x_{\ell' - 1}x_{\ell'}$ and let $P' = (x_0 = x, \ldots, x_{\ell' - 1})$ (see Fig.~\ref{subfig:flag}).
If $s < \ell$, let $P' = P$.
Note that by construction we have $x_1 = y_j$ for some $0 < j < k-1$.
Let $F' = (x, y_0, \ldots, y_{j-1})$.
We now consider separate cases:
\begin{enumerate}[label=(Case\arabic*), wide]
    \item $\vend(P') \neq y_{j-1}$ (see Fig.~\ref{subfig:case1}).
    Flip the path $P'$, then shift the fan $F'$, and finally color the edge $xy_{j-1}$ with $\alpha$.

    \item $\vend(P') = y_{j-1}$ (see Fig.~\ref{subfig:case2}).
    Flip the path $P'$, then shift the fan $F$, and finally color the edge $xy_{k-1}$ with $\alpha$.
\end{enumerate}

\begin{figure}[b!]
	\centering
	\begin{subfigure}[t]{\textwidth}
		\centering
		\begin{tikzpicture}
            \node[circle,fill=black,draw,inner sep=0pt,minimum size=4pt] (a) at (0,0) {};
        	\path (a) ++(0:1) node[circle,fill=black,draw,inner sep=0pt,minimum size=4pt] (b) {};
        	\path (b) ++(0:1) node[circle,fill=black,draw,inner sep=0pt,minimum size=4pt] (c) {};
                \path (c) ++(0:1) node[circle,fill=black,draw,inner sep=0pt,minimum size=4pt] (d) {};
                \path (d) ++(0:1) node[circle,fill=black,draw,inner sep=0pt,minimum size=4pt] (e) {};
                \path (e) ++(0:1) node[circle,fill=black,draw,inner sep=0pt,minimum size=4pt] (l) {};
                \path (l) ++(0:1) node[circle,fill=black,draw,inner sep=0pt,minimum size=4pt] (m) {};
                \path (m) ++(0:1) node[circle,fill=black,draw,inner sep=0pt,minimum size=4pt] (n) {};
                \path (n) ++(0:1) node[circle,fill=black,draw,inner sep=0pt,minimum size=4pt] (o) {};
                \path (o) ++(0:1) node[circle,fill=black,draw,inner sep=0pt,minimum size=4pt] (f) {};
                \path (f) ++(0:1) node[circle,fill=black,draw,inner sep=0pt,minimum size=4pt] (g) {};
                \path (g) ++(0:1) node[circle,fill=black,draw,inner sep=0pt,minimum size=4pt] (h) {};
         
        	\draw[thick] (a) to node[midway,inner sep=1pt,outer sep=1pt,minimum size=4pt,fill=white] {$\beta$} (b) to node[midway,inner sep=1pt,outer sep=1pt,minimum size=4pt,fill=white] {$\alpha$} (c) to node[midway,inner sep=1pt,outer sep=1pt,minimum size=4pt,fill=white] {$\beta$} (d) to node[midway,inner sep=1pt,outer sep=1pt,minimum size=4pt,fill=white] {$\alpha$} (e) to node[midway,inner sep=1pt,outer sep=1pt,minimum size=4pt,fill=white] {$\beta$} (l) to node[midway,inner sep=1pt,outer sep=1pt,minimum size=4pt,fill=white] {$\alpha$} (m)  (n) to node[midway,inner sep=1pt,outer sep=1pt,minimum size=4pt,fill=white] {$\beta$} (o) to node[midway,inner sep=1pt,outer sep=1pt,minimum size=4pt,fill=white] {$\alpha$} (f) to node[midway,inner sep=1pt,outer sep=1pt,minimum size=4pt,fill=white] {$\beta$} (g) to node[midway,inner sep=1pt,outer sep=1pt,minimum size=4pt,fill=white] {$\alpha$} (h);

                \draw[thick, dotted] (m) -- (n);

            \draw[decoration={brace,amplitude=10pt,mirror}, decorate] (0, -0.2) -- node [midway,below,xshift=0pt,yshift=-10pt] {$P'$} (6,-0.2);

            \node at (6.1, 0.3) {$x_{\ell' - 1}$};
            \node at (7.1, 0.3) {$x_{\ell'}$};
            \node at (0, 0.3) {$x$};
            \node at (1.1, 0.3) {$y_j$};

        \begin{scope}[yshift=2.9cm]
            \node[circle,fill=black,draw,inner sep=0pt,minimum size=4pt] (a) at (0,0) {};
        	\path (a) ++(0:1) node[circle,fill=black,draw,inner sep=0pt,minimum size=4pt] (b) {};
        	\path (b) ++(0:1) node[circle,fill=black,draw,inner sep=0pt,minimum size=4pt] (c) {};
                \path (c) ++(0:1) node[circle,fill=black,draw,inner sep=0pt,minimum size=4pt] (d) {};
                \path (d) ++(0:1) node[circle,fill=black,draw,inner sep=0pt,minimum size=4pt] (e) {};
                \path (e) ++(0:1) node[circle,fill=black,draw,inner sep=0pt,minimum size=4pt] (l) {};
                \path (l) ++(0:1) node[circle,fill=black,draw,inner sep=0pt,minimum size=4pt] (m) {};
                \path (m) ++(0:1) node[circle,fill=black,draw,inner sep=0pt,minimum size=4pt] (n) {};
                \path (n) ++(0:1) node[circle,fill=black,draw,inner sep=0pt,minimum size=4pt] (o) {};
                \path (o) ++(0:1) node[circle,fill=black,draw,inner sep=0pt,minimum size=4pt] (f) {};
                \path (f) ++(0:1) node[circle,fill=black,draw,inner sep=0pt,minimum size=4pt] (g) {};
                \path (g) ++(0:1) node[circle,fill=black,draw,inner sep=0pt,minimum size=4pt] (h) {};
         
        	\draw[thick] (a) to node[midway,inner sep=1pt,outer sep=1pt,minimum size=4pt,fill=white] {$\alpha$} (b) to node[midway,inner sep=1pt,outer sep=1pt,minimum size=4pt,fill=white] {$\beta$} (c) to node[midway,inner sep=1pt,outer sep=1pt,minimum size=4pt,fill=white] {$\alpha$} (d) to node[midway,inner sep=1pt,outer sep=1pt,minimum size=4pt,fill=white] {$\beta$} (e) to node[midway,inner sep=1pt,outer sep=1pt,minimum size=4pt,fill=white] {$\alpha$} (l) to node[midway,inner sep=1pt,outer sep=1pt,minimum size=4pt,fill=white] {$\beta$} (m) to node[midway,inner sep=1pt,outer sep=1pt,minimum size=4pt,fill=white] {$\alpha$} (n) to node[midway,inner sep=1pt,outer sep=1pt,minimum size=4pt,fill=white] {$\beta$} (o) to node[midway,inner sep=1pt,outer sep=1pt,minimum size=4pt,fill=white] {$\alpha$} (f) to node[midway,inner sep=1pt,outer sep=1pt,minimum size=4pt,fill=white] {$\beta$} (g) to node[midway,inner sep=1pt,outer sep=1pt,minimum size=4pt,fill=white] {$\alpha$} (h);

            \draw[decoration={brace,amplitude=10pt,mirror}, decorate] (0, -0.2) -- node [midway,below,xshift=0pt,yshift=-10pt] {$P$} (11,-0.2);
            \node at (0, 0.3) {$x$};
            \node at (1.1, 0.3) {$y_j$};

        \end{scope}

        \begin{scope}[yshift=2.1cm]
            \draw[-{Stealth[length=3mm,width=2mm]},very thick,decoration = {snake,pre length=3pt,post length=7pt,},decorate] (5.5,-0.5) -- (5.5,-1.5);
        \end{scope}
        	
        \end{tikzpicture}
		\caption{Flagging an edge.}\label{subfig:flag}
	\end{subfigure}%
	\qquad%
        \begin{subfigure}[t]{\textwidth}
		\centering
		\begin{tikzpicture}
            \clip (-0.5, 4.5) rectangle (12, -1.3);
		\node[circle,fill=black,draw,inner sep=0pt,minimum size=4pt] (a) at (0,0) {};
        	\path (a) ++(0:1) node[circle,fill=black,draw,inner sep=0pt,minimum size=4pt] (b) {};
        	\path (b) ++(0:1) node[circle,fill=black,draw,inner sep=0pt,minimum size=4pt] (c) {};
                \path (c) ++(0:1) node[circle,fill=black,draw,inner sep=0pt,minimum size=4pt] (d) {};
                \path (d) ++(0:1) node[circle,fill=black,draw,inner sep=0pt,minimum size=4pt] (e) {};
                \path (e) ++(0:5.5) node[circle,fill=black,draw,inner sep=0pt,minimum size=4pt] (f) {};
                \path (f) ++(0:1) node[circle,fill=black,draw,inner sep=0pt,minimum size=4pt] (g) {};
                \path (g) ++(0:1) node[circle,fill=black,draw,inner sep=0pt,minimum size=4pt] (h) {};
        	
        	\draw[thick, decorate,decoration=zigzag] (e) to (f);

            \path (a) ++(45:1) node[circle,fill=black,draw,inner sep=0pt,minimum size=4pt] (i) {};
            \path (a) ++(90:1) node[circle,fill=black,draw,inner sep=0pt,minimum size=4pt] (j) {};
            \path (a) ++(-45:1) node[circle,fill=black,draw,inner sep=0pt,minimum size=4pt] (k) {};
            \path (a) ++(-90:1) node[circle,fill=black,draw,inner sep=0pt,minimum size=4pt] (l) {};

            \node at (-0.3, -0.3) {$x_0$};
            \node at (11.6, -0.3) {$x_{\ell'-1}$};
            \node at (1, -0.3) {$x_1$};
            \node at (1.3, 0.7) {$y_{j-1}$};
        	
        	\draw[thick] (a) to node[midway,inner sep=1pt,outer sep=1pt,minimum size=4pt,fill=white] {$\beta$} (b) to node[midway,inner sep=1pt,outer sep=1pt,minimum size=4pt,fill=white] {$\alpha$} (c) to node[midway,inner sep=1pt,outer sep=1pt,minimum size=4pt,fill=white] {$\beta$} (d) to node[midway,inner sep=1pt,outer sep=1pt,minimum size=4pt,fill=white] {$\alpha$} (e) (f) to node[midway,inner sep=1pt,outer sep=1pt,minimum size=4pt,fill=white] {$\beta$} (g) to node[midway,inner sep=1pt,outer sep=1pt,minimum size=4pt,fill=white] {$\alpha$} (h) (a) to node[midway,inner sep=1pt,outer sep=1pt,minimum size=4pt,fill=white] {$\alpha$} (i);

                \draw[thick] (a) to node[midway,inner sep=1pt,outer sep=1pt,minimum size=4pt,fill=white] {$\xi$} (j) (k) to node[midway,inner sep=1pt,outer sep=1pt,minimum size=4pt,fill=white] {$\eta$} (a) to node[midway,inner sep=1pt,outer sep=1pt,minimum size=4pt,fill=white] {$\gamma$} (l);
                % \draw[thick, dotted] (a) -- (j);

        \begin{scope}[yshift=3cm]
            \node[circle,fill=black,draw,inner sep=0pt,minimum size=4pt] (a) at (0,0) {};
        	\path (a) ++(0:1) node[circle,fill=black,draw,inner sep=0pt,minimum size=4pt] (b) {};
        	\path (b) ++(0:1) node[circle,fill=black,draw,inner sep=0pt,minimum size=4pt] (c) {};
                \path (c) ++(0:1) node[circle,fill=black,draw,inner sep=0pt,minimum size=4pt] (d) {};
                \path (d) ++(0:1) node[circle,fill=black,draw,inner sep=0pt,minimum size=4pt] (e) {};
                \path (e) ++(0:5.5) node[circle,fill=black,draw,inner sep=0pt,minimum size=4pt] (f) {};
                \path (f) ++(0:1) node[circle,fill=black,draw,inner sep=0pt,minimum size=4pt] (g) {};
                \path (g) ++(0:1) node[circle,fill=black,draw,inner sep=0pt,minimum size=4pt] (h) {};
        	
        	\draw[thick, decorate,decoration=zigzag] (e) to (f);

            \path (a) ++(45:1) node[circle,fill=black,draw,inner sep=0pt,minimum size=4pt] (i) {};
            \path (a) ++(90:1) node[circle,fill=black,draw,inner sep=0pt,minimum size=4pt] (j) {};
            \path (a) ++(-45:1) node[circle,fill=black,draw,inner sep=0pt,minimum size=4pt] (k) {};
            \path (a) ++(-90:1) node[circle,fill=black,draw,inner sep=0pt,minimum size=4pt] (l) {};

            \node at (-0.3, -0.3) {$x_0$};
            \node at (11.6, -0.3) {$x_{\ell'-1}$};
            \node at (1, -0.3) {$x_1$};
            \node at (1.3, 0.7) {$y_{j-1}$};

            \draw[thick] (a) to node[midway,inner sep=1pt,outer sep=1pt,minimum size=4pt,fill=white] {$\xi$} (i) (k) to node[midway,inner sep=1pt,outer sep=1pt,minimum size=4pt,fill=white] {$\eta$} (a) to node[midway,inner sep=1pt,outer sep=1pt,minimum size=4pt,fill=white] {$\gamma$} (l);
            \draw[thick, dotted] (a) -- (j);
        	
        	\draw[thick] (a) to node[midway,inner sep=1pt,outer sep=1pt,minimum size=4pt,fill=white] {$\alpha$} (b) to node[midway,inner sep=1pt,outer sep=1pt,minimum size=4pt,fill=white] {$\beta$} (c) to node[midway,inner sep=1pt,outer sep=1pt,minimum size=4pt,fill=white] {$\alpha$} (d) to node[midway,inner sep=1pt,outer sep=1pt,minimum size=4pt,fill=white] {$\beta$} (e) (f) to node[midway,inner sep=1pt,outer sep=1pt,minimum size=4pt,fill=white] {$\alpha$} (g) to node[midway,inner sep=1pt,outer sep=1pt,minimum size=4pt,fill=white] {$\beta$} (h);

        \end{scope}

        \begin{scope}[yshift=2.5cm]
            \draw[-{Stealth[length=3mm,width=2mm]},very thick,decoration = {snake,pre length=3pt,post length=7pt,},decorate] (5.75,-0.5) -- (5.75,-1.5);
        \end{scope}
        	
		\end{tikzpicture}
		\caption{Augmenting when $x_{\ell' - 1} \neq y_{j-1}$.}\label{subfig:case1}
	\end{subfigure}%
	\qquad%
	\begin{subfigure}[b]{\textwidth}
		\centering
		\begin{tikzpicture}
            \clip (-7.8, 2.3) rectangle (5.1, -1.8);
            \begin{scope}
		\node[circle,fill=black,draw,inner sep=0pt,minimum size=4pt] (a) at (0,0) {};
        	\path (a) ++(0:1) node[circle,fill=black,draw,inner sep=0pt,minimum size=4pt] (b) {};
        	\path (b) ++(-30:1) node[circle,fill=black,draw,inner sep=0pt,minimum size=4pt] (c) {};
                \path (c) ++(-20:1) node[circle,fill=black,draw,inner sep=0pt,minimum size=4pt] (d) {};
                \path (d) ++(0:1) node[circle,fill=black,draw,inner sep=0pt,minimum size=4pt] (e) {};

            \path (a) ++(45:1) node[circle,fill=black,draw,inner sep=0pt,minimum size=4pt] (i) {};
            \path (a) ++(90:1) node[circle,fill=black,draw,inner sep=0pt,minimum size=4pt] (j) {};
            \path (a) ++(-45:1) node[circle,fill=black,draw,inner sep=0pt,minimum size=4pt] (k) {};
            \path (a) ++(-90:1) node[circle,fill=black,draw,inner sep=0pt,minimum size=4pt] (l) {};

            \path (i) ++(30:1) node[circle,fill=black,draw,inner sep=0pt,minimum size=4pt] (f) {};
            \path (f) ++(20:1) node[circle,fill=black,draw,inner sep=0pt,minimum size=4pt] (g) {};

            \node at (-0.3, -0.3) {$x_0$};
            \node at (1.2, 0.5) {$y_{j-1}$};
            \node at (1, -0.3) {$x_1$};
            \node at (-0.3, -1.3) {$y_{k-1}$};

            \draw[thick] (a) to node[midway,inner sep=1pt,outer sep=1pt,minimum size=4pt,fill=white] {$\eta$} (b) (k) to node[midway,inner sep=1pt,outer sep=1pt,minimum size=4pt,fill=white] {$\gamma$} (a) to node[midway,inner sep=1pt,outer sep=1pt,minimum size=4pt,fill=white] {$\xi$} (j);
            
            \draw[thick, decorate, decoration=zigzag] (g) to[out=10,in=10, looseness=2] (e);
        	
        	\draw[thick] (a) to node[midway,inner sep=1pt,outer sep=1pt,minimum size=4pt,fill=white] {$\beta$} (i) (b) to node[midway,inner sep=1pt,outer sep=1pt,minimum size=4pt,fill=white] {$\alpha$} (c) to node[midway,inner sep=1pt,outer sep=1pt,minimum size=4pt,fill=white] {$\beta$} (d) to node[midway,inner sep=1pt,outer sep=1pt,minimum size=4pt,fill=white] {$\alpha$} (e) (g) to node[midway,inner sep=1pt,outer sep=1pt,minimum size=4pt,fill=white] {$\beta$} (f) to node[midway,inner sep=1pt,outer sep=1pt,minimum size=4pt,fill=white] {$\alpha$} (i) (a) to node[midway,inner sep=1pt,outer sep=1pt,minimum size=4pt,fill=white] {$\alpha$} (l);

         \end{scope}

        \begin{scope}[xshift=-7cm]
            \node[circle,fill=black,draw,inner sep=0pt,minimum size=4pt] (a) at (0,0) {};
        	\path (a) ++(0:1) node[circle,fill=black,draw,inner sep=0pt,minimum size=4pt] (b) {};
        	\path (b) ++(-30:1) node[circle,fill=black,draw,inner sep=0pt,minimum size=4pt] (c) {};
                \path (c) ++(-20:1) node[circle,fill=black,draw,inner sep=0pt,minimum size=4pt] (d) {};
                \path (d) ++(0:1) node[circle,fill=black,draw,inner sep=0pt,minimum size=4pt] (e) {};

            \path (a) ++(45:1) node[circle,fill=black,draw,inner sep=0pt,minimum size=4pt] (i) {};
            \path (a) ++(90:1) node[circle,fill=black,draw,inner sep=0pt,minimum size=4pt] (j) {};
            \path (a) ++(-45:1) node[circle,fill=black,draw,inner sep=0pt,minimum size=4pt] (k) {};
            \path (a) ++(-90:1) node[circle,fill=black,draw,inner sep=0pt,minimum size=4pt] (l) {};

            \path (i) ++(30:1) node[circle,fill=black,draw,inner sep=0pt,minimum size=4pt] (f) {};
            \path (f) ++(20:1) node[circle,fill=black,draw,inner sep=0pt,minimum size=4pt] (g) {};

            \node at (-0.3, -0.3) {$x_0$};
            \node at (1.2, 0.5) {$y_{j-1}$};
            \node at (1, -0.3) {$x_1$};
            \node at (-0.3, -1.3) {$y_{k-1}$};

            \draw[thick] (a) to node[midway,inner sep=1pt,outer sep=1pt,minimum size=4pt,fill=white] {$\xi$} (i) (k) to node[midway,inner sep=1pt,outer sep=1pt,minimum size=4pt,fill=white] {$\eta$} (a) to node[midway,inner sep=1pt,outer sep=1pt,minimum size=4pt,fill=white] {$\gamma$} (l);
            \draw[thick, dotted] (a) -- (j);
            
            \draw[thick, decorate, decoration=zigzag] (g) to[out=10,in=10, looseness=2] (e);
        	
        	\draw[thick] (a) to node[midway,inner sep=1pt,outer sep=1pt,minimum size=4pt,fill=white] {$\alpha$} (b) to node[midway,inner sep=1pt,outer sep=1pt,minimum size=4pt,fill=white] {$\beta$} (c) to node[midway,inner sep=1pt,outer sep=1pt,minimum size=4pt,fill=white] {$\alpha$} (d) to node[midway,inner sep=1pt,outer sep=1pt,minimum size=4pt,fill=white] {$\beta$} (e) (g) to node[midway,inner sep=1pt,outer sep=1pt,minimum size=4pt,fill=white] {$\alpha$} (f) to node[midway,inner sep=1pt,outer sep=1pt,minimum size=4pt,fill=white] {$\beta$} (i);
        \end{scope}

        \draw[-{Stealth[length=3mm,width=2mm]},very thick,decoration = {snake,pre length=3pt,post length=7pt,},decorate] (-1.5,0.3) -- (-0.5,0.3);
		\end{tikzpicture}
		\caption{Augmenting when $x_{\ell' - 1} = y_{j-1}$.}\label{subfig:case2}
	\end{subfigure}
	\caption{Steps~\ref{step:start}--\ref{step:end} of Algorithm~\ref{alg:color}.}\label{fig:cases}
\end{figure}

Let us now describe the second stage.
Let $\phi$ be the partial coloring from the first stage.
Let $G^*$ be the subgraph induced by the uncolored edges.
(Note that these are precisely the edges which were flagged during the first stage.)
If $\Delta(G^*) > \eps\Delta/6$, we return $\mathsf{FAIL}$.
If not, we properly color $G^*$ with $3\Delta(G^*)$ colors using Algorithm~\ref{alg:greedy}.
Let $\psi$ be the resulting coloring on $G^*$.
For each edge $e \in E(G^*)$, we let $\phi(e) = (1+\epsilon/2)\Delta + \psi(e)$.
Since $3\Delta(G^*) \leq \epsilon\Delta/2$, this defines a proper $(1+\epsilon)\Delta$-edge-coloring of $G$.

\vspace{5pt}
% \newpage
\begin{breakablealgorithm}\small
\caption{$(1+\eps)\Delta$-Edge-Coloring}\label{alg:color}
\begin{flushleft}
\textbf{Input}: An $n$-vertex graph $G$ having $m$ edges and maximum degree $\Delta$, and two parameters $\kappa$ and $\ell$. \\
% \medskip
\textbf{Output}: $\mathsf{FAIL}$ or a proper $(1 + \epsilon)\Delta$-edge-coloring of $G$.
\end{flushleft}

\begin{algorithmic}[1]
    \State $U \gets E$, \quad $\phi(e) \gets \blank$ \textbf{for each} $e \in U$
    \While{$U \neq \0$}
        \State Pick an edge $e \in U$ and a vertex $x \in e$ uniformly at random. \label{step:random_edge}
        \State Define $C$ by sampling with replacement $\kappa$ times from $[(1+\epsilon/2)\Delta]$. \label{step:random_sample}
        \State $\mathsf{Out} \gets \hyperref[alg:viz]{\mathsf{VizingChain}}(\phi, e, x, C, \ell)$ 
        \medskip\Comment{Algorithm~\ref{alg:viz}} \label{step:call_to_viz}
        \If{$\mathsf{Out} = \mathsf{FAIL}$}
            \State \Return $\mathsf{FAIL}$ \label{step:fail_stage1}
        \EndIf
        \medskip
        \State $((F, P), \alpha) \gets \mathsf{Out}$
        \State Let $F = (x, y_0, \ldots, y_{k-1})$ and let $P = (x_0 = x, x_1, \ldots, x_s)$.
        \If{$s = 0$}
            \State Shift the fan $F$ and color the edge $xy_{k-1}$ with $\alpha$.
            \State \textbf{continue}
        \ElsIf{$s = \ell$} \label{step:start}
            \State Let $\ell' \in [\ell]$ be an integer chosen uniformly at random. \label{step:random_choice}
            \State Uncolor the edge $x_{\ell'-1}x_{\ell'}$. \label{step:uncolor_edge}
            \State $P' \gets (x_0, \ldots, x_{\ell'-1})$.
        \Else
            \State $P ' \gets P$.
        \EndIf
        \medskip
        \State Let $j$ be such that $y_j = x_1$ and let $F' = (x, y_0, \ldots, y_{j-1})$.
        \State Flip the path $P'$.
        \If{$x_{\ell' - 1} = y_{j-1}$}
            \State Shift the fan $F$ and color the edge $xy_{k-1}$ with $\alpha$.
        \Else
            \State Shift the fan $F'$ and color the edge $xy_{j-1}$ with $\alpha$.
        \EndIf \label{step:end}
        \medskip
        \State $U \gets U \setminus \set{e}$
    \EndWhile
    \medskip
    \State Let $G^*$ be the subgraph induced by the uncolored edges. \label{step:define_G_star}
    \If{$\Delta(G^*) > \epsilon\,\Delta / 6$} 
        \State \Return $\mathsf{FAIL}$ \label{step:fail_stage2}
    \EndIf
    \State Run Algorithm~\ref{alg:greedy} to obtain a $3\Delta(G^*)$-edge-coloring $\psi$ of $G^*$ \label{step:sinnamon}
    \State $\phi(e) \gets \psi(e) + \lfloor(1 + \epsilon/2)\Delta\rfloor$ for each $e \in E(G^*)$ \label{step:combine}
    \State \Return $\phi$
\end{algorithmic}
\end{breakablealgorithm}
\vspace{5pt}

A few remarks are in order.
First, note that as a result of the way we manage the set $U$, we need not explicitly ``flag'' edges at Step~\ref{step:uncolor_edge}.
From our earlier analysis of Algorithm~\ref{alg:viz} and since $|C| \leq \kappa$, Step~\ref{step:call_to_viz} takes $O(\kappa^2 + \ell)$ time.
Additionally, the shifting and flipping operations take time proportional to the lengths of the fans and paths involved, and so we may conclude that each iteration takes $O(\kappa^2 + \ell)$ time.
It follows that the first stage takes time $O(m\,(\kappa^2 + \ell))$.
The verification at Step~\ref{step:fail_stage2} can be performed in $O(m)$ time and as a result of Proposition~\ref{prop:greedy}, Step~\ref{step:sinnamon} takes $O(\max\set{m,\,\log^2 n})$ time with probability at least $1-1/\poly(n)$.
Finally, Step~\ref{step:combine} takes $O(m)$ time.
In total, the runtime of Algorithm~\ref{alg:color} is
\begin{align}\label{eqn:runtime}
    O\left(\max\left\{m\,\kappa^2,\,m\,\ell,\,m,\,\log^2n\right\}\right),
\end{align}
with probability at least $1-1/\poly(n)$.

\subsection{Proof of Correctness}\label{subsec:poc}

In this section, we will prove the correctness of our algorithm.
Note that as a result of Proposition~\ref{prop:greedy} and Step~\ref{step:fail_stage2}, it is enough to show that Stage 1 constructs a proper partial $(1 + \eps/2)\Delta$-edge-coloring.
Let us first consider the output of Algorithm~\ref{alg:fan}.

\begin{lemma}\label{lemma:fan}
    Consider running Algorithm~\ref{alg:fan} on input $(\phi, xy, x, C)$.
    Suppose the algorithm does not fail and outputs $(F, \alpha, j)$ where $F = (x, y_0 = y, y_1, \ldots, y_{k-1})$.
    Let $\beta \in M(\phi, x)$ be arbitrary and let $P$ be the maximal $\alpha\beta$-path under $\phi$ starting at $x$.
    One of the following must hold:
    \begin{enumerate}
        \item either $j = k$ and the edge $xy_{k-1}$ is $\psi$-happy, where $\psi$ is the coloring obtained from $\phi$ by shifting $F$, or
        \item the edge $xy_{j-1}$ is $\psi$-happy, where $\psi$ is the coloring obtained from $\phi$ by flipping $P$ and then shifting the fan $F' = (x, y_0 = y, y_1, \ldots, y_{j-1})$, or
        \item the edge $xy_{k-1}$ is $\psi$-happy, where $\psi$ is the coloring obtained from $\phi$ by flipping $P$ and then shifting $F$.
    \end{enumerate}
    Moreover, the happy edge can be colored with the color $\alpha$.
\end{lemma}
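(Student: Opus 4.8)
The plan is to read off from the execution of Algorithm~\ref{alg:fan} what the triple $(F,\alpha,j)$ actually certifies, and then to run the classical Vizing fan/path analysis, being careful about which fan survives the flip of $P$. Tracing the \textsf{while} loop: at the moment the algorithm returns, the current fan is $F=(x,y_0,\dots,y_{k-1})$ with $z=y_{k-1}$ and $\alpha=\eta=\min\bigl(M(\phi,y_{k-1})\cap C\bigr)$, so in all cases $\alpha\in M(\phi,y_{k-1})$. Moreover, either the algorithm exits at Step~\ref{step:happy_fan}, which happens exactly when $\alpha\in M(\phi,x)$ and then $j=k=\length(F)$; or it exits via the check at Step~\ref{step:check_fan}, which happens when $\phi(xy_j)=\alpha$ for some $1\le j\le k-2$, and then the fan property (Definition~\ref{defn:fans}, applied to the index $j$) also gives $\alpha\in M(\phi,y_{j-1})$. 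The first scenario will yield conclusion~1 and the second will yield conclusion~2 or~3.

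Before the case split I would record three bookkeeping facts. (a) Shifting a fan leaves $M(\phi,\cdot)$ unchanged at the pivot and can only enlarge it at the end vertex (whose incident fan edge gets uncolored). (b) On a maximal $\alpha\beta$-path, any vertex that misses $\alpha$ or misses $\beta$ must be one of the two endpoints, since an interior vertex meets exactly one edge of colour $\alpha$ and one of colour $\beta$. (c) Flipping an $\alpha\beta$-path $P$ changes $M(\phi,\cdot)$ only at the two endpoints of $P$, and even there only affects whether $\alpha$ and $\beta$ are missing; the status of every colour outside $\{\alpha,\beta\}$ is untouched at every vertex. The reason (c) is powerful here is that $\beta\in M(\phi,x)$ forces $x$ to carry no $\beta$-edge, so the only fan edge whose colour lies in $\{\alpha,\beta\}$ is $xy_j$ (colour $\alpha$), which is in turn the only edge of $P$ incident to $x$.

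Case $j=k$: take $\psi$ to be the result of shifting $F$ (proper by Fact~\ref{fact:fan}); by~(a), $\alpha\in M(\psi,x)=M(\phi,x)$ and $\alpha\in M(\psi,y_{k-1})\supseteq M(\phi,y_{k-1})$, so $xy_{k-1}$ is $\psi$-happy and receives $\alpha$ (conclusion~1). Case $1\le j\le k-2$: here $\alpha\ne\beta$ (one is used at $x$, the other missing there), so $P$ is well defined with $x_1=y_j$; moreover $y_{j-1}$ and $y_{k-1}$ are distinct vertices, both missing $\alpha$ and both different from $x$, so by~(b) at most one of them lies on $P$. Let $\phi'$ be obtained by flipping $P$ (proper by Fact~\ref{fact:path}); since $xy_j$ is the unique edge of $P$ at $x$, one gets $\alpha\in M(\phi',x)$. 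If $y_{j-1}\notin V(P)$, I would use~(c) to check that $F'=(x,y_0,\dots,y_{j-1})$ is still a fan under $\phi'$ — none of its edges carries colour $\alpha$ or $\beta$, so every condition $\phi'(xy_i)\in M(\phi',y_{i-1})$ survives — and note $\alpha\in M(\phi',y_{j-1})=M(\phi,y_{j-1})$; shifting $F'$ then yields (Fact~\ref{fact:fan}) a proper $\psi$ with $\alpha\in M(\psi,x)\cap M(\psi,y_{j-1})$ by~(a), giving conclusion~2. If $y_{j-1}\in V(P)$, then $y_{k-1}\notin V(P)$ and the unique edge of $P$ at $y_{j-1}$ has colour $\beta$ (since $y_{j-1}$ misses $\alpha$); hence after the flip $\phi'(xy_j)=\beta\in M(\phi',y_{j-1})$, which is exactly the fan condition at index $j$ needed for the whole fan $F$ to be valid under $\phi'$; shifting $F$ then yields a proper $\psi$ with $\alpha\in M(\psi,x)$ and $\alpha\in M(\psi,y_{k-1})\supseteq M(\phi',y_{k-1})=M(\phi,y_{k-1})$, giving conclusion~3. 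In every case the common missing colour produced is $\alpha$.

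The only genuinely delicate point is the verification that the relevant fan ($F'$ in one subcase, $F$ in the other) remains a legitimate fan after flipping $P$; this is where the colour bookkeeping of fact~(c) and the dichotomy ``$y_{j-1}\in V(P)$ or not'' do the real work — if $y_{j-1}$ is off the path we keep only the short fan $F'$ so that no fan condition is disturbed, while if $y_{j-1}$ lies on the path the flip converts its $\beta$-edge into an $\alpha$-edge, which is precisely what repairs the index-$j$ condition and lets us keep the full fan $F$. Everything else is routine tracking of missing sets through shifts and flips via Facts~\ref{fact:fan} and~\ref{fact:path}.
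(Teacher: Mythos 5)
Your proposal is correct and follows essentially the same route as the paper's proof: handle $j=k$ directly via Fact~\ref{fact:fan}, and for $1 \le j \le k-2$ flip $P$ first and then branch on whether $y_{j-1}$ is the far endpoint of $P$ (your condition $y_{j-1}\in V(P)$ is equivalent, since $y_{j-1}$ misses $\alpha$ and therefore can only lie on $P$ as an endpoint, which the paper states as $\vend(P)=y_{j-1}$). Where you add value is in spelling out the bookkeeping the paper leaves implicit — in particular that $\beta\in M(\phi,x)$ forces $xy_j$ to be the only $P$-edge at $x$, that none of $F'$'s edges carry $\alpha$ or $\beta$, and that interior vertices of $P$ have unchanged missing sets — which is exactly what is needed to verify that $F'$ (resp.\ $F$) remains a valid fan after the flip, a step the paper dispatches with ``it follows similarly.''
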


\begin{proof}
    If $j = k$, we are at Step~\ref{step:happy_fan}. 
    In particular, $\alpha \in M(\phi, x) \cap M(\phi, y_{k-1})$.
    Note that $M(\psi, x) = M(\phi, x)$ and $M(\psi, y_{k-1}) \supseteq M(\phi, y_{k-1})$.
    It follows that $\alpha \in M(\psi, x) \cap M(\psi, y_{k-1})$ as well, implying the edge $xy_{k-1}$ is $\psi$-happy and may be colored $\alpha$.

    Now, suppose that we do not reach Step~\ref{step:happy_fan}.
    In this case we must have $1 < j < k$.
    Let $\psi'$ be the coloring obtained from $\phi$ by flipping $P$.
    Note that we have $\alpha \in M(\psi', x)$.
    Furthermore, by construction we have $\alpha \in M(\phi, y_{j-1}) \cap M(\phi, y_{k-1})$.
    Additionally, for all vertices $v \in V\setminus\set{x,\, \vend(P)}$, we have $M(\psi', v) = M(\phi, v)$.    
    If $\vend(P) \neq y_{j-1}$ (i.e., $\alpha \in M(\psi', y_{j-1})$), then $F'$ is a valid fan under the coloring $\psi'$ and an identical argument to the previous paragraph shows that $xy_{j-1}$ is $\psi$-happy, where $\psi$ is obtained from $\psi'$ by shifting $F'$.
    If not, then it must be the case that $\alpha \in M(\psi', y_{k-1})$ and $\beta \in M(\psi', y_{j-1})$.
    It follows similarly that $F$ is a valid fan under the coloring $\psi'$, and $xy_{k-1}$ is $\psi$-happy, where $\psi$ is obtained from $\psi'$ by shifting $F$.
\end{proof}

Now, let us consider the output of Algorithm~\ref{alg:viz}.

\begin{lemma}\label{lemma:viz}
    Consider running Algorithm~\ref{alg:viz} on input $(\phi, xy, x, C, \ell)$.
    Suppose the algorithm does not fail and outputs $((F, P), \alpha)$ such that $F = (x, y_0=y, \ldots, y_{k-1})$ and $P = (x_0 = x, \ldots, x_s)$ is an $\alpha\beta$-path.
    One of the following must hold for $j$ such that $y_j = x_1$ (if $s\geq 1$):
    \begin{enumerate}
        \item either $P = (x)$ and the edge $xy_{k-1}$ is $\psi$-happy, where $\psi$ is the coloring obtained from $\phi$ by shifting $F$, or
        \item $\length(P) < \ell$ and the edge $xy_{j-1}$ is $\psi$-happy, where $\psi$ is the coloring obtained from $\phi$ by flipping $P$ and then shifting $F' = (x, y_0 = y, y_1, \ldots, y_{j-1})$, or
        \item $\length(P) < \ell$ and the edge $xy_{k-1}$ is $\psi$-happy, where $\psi$ is the coloring obtained from $\phi$ by flipping $P$ and then shifting $F$, or
        \item $\length(P) = \ell$.
    \end{enumerate}
    Moreover, the happy edge may be colored with the color $\alpha$.
\end{lemma}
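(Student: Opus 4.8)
The plan is to derive Lemma~\ref{lemma:viz} almost immediately from Lemma~\ref{lemma:fan} by tracing through the two points at which Algorithm~\ref{alg:viz} can return a non-$\mathsf{FAIL}$ output. First I would note that, since we assume Algorithm~\ref{alg:viz} does not fail, the call to Algorithm~\ref{alg:fan} at Step~\ref{step:alg_fan} does not fail either, and returns some $(F,\alpha,j)$ with $F=(x,y_0=y,\ldots,y_{k-1})$; thus Lemma~\ref{lemma:fan} applies to this output, and it only remains to match the conclusions up with the three return paths of Algorithm~\ref{alg:viz}.

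If $j=\length(F)$, the algorithm returns at Step~\ref{step:happy_fan_viz} with $P=(x)$, so $s=0$. This is exactly the case $j=k$ of Lemma~\ref{lemma:fan}, which gives that $xy_{k-1}$ is $\psi$-happy for $\psi$ obtained from $\phi$ by shifting $F$, and may be colored $\alpha$; this is item~1 of Lemma~\ref{lemma:viz} (with no condition on $j$ needed, since $s=0$).

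Otherwise $j<\length(F)$, and since the algorithm does not fail it passes the check at Step~\ref{step:fail_x_alpha}, so $\beta\defeq\min M(\phi,x)\cap C\in M(\phi,x)$ is well-defined and $P$ is the $\alpha\beta$-path of length at most $\ell$ starting at $x$. Here I would record two bookkeeping observations. First, in this branch Algorithm~\ref{alg:fan} set $z\gets M(\phi,x)[\eta]$ with $z\neq\blank$, so $\phi(xy_j)=\alpha$; hence $\alpha\neq\beta$ (as $\alpha\notin M(\phi,x)$), and $\alpha$ is present at $x$, so $s=\length(P)\geq 1$ and the first internal vertex of $P$ is the unique $\alpha$-neighbor of $x$, i.e.\ $x_1=y_j$ — so the index $j$ returned by Algorithm~\ref{alg:fan} coincides with the index in the statement of Lemma~\ref{lemma:viz}. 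Second, if $\length(P)<\ell$ then the \emph{maximal} $\alpha\beta$-path starting at $x$ already has length $<\ell$ (otherwise Step~\ref{step:path_def} would have produced a path of length exactly $\ell$), so the returned $P$ equals that maximal path. With these in hand: if $\length(P)=\ell$ we are in item~4 of Lemma~\ref{lemma:viz} and there is nothing to prove; if $\length(P)<\ell$, apply Lemma~\ref{lemma:fan} with this $\beta$ and this (maximal) $P$, and items~2 and~3 of Lemma~\ref{lemma:fan} translate verbatim into items~2 and~3 of Lemma~\ref{lemma:viz}, with the same fans $F'=(x,y_0,\ldots,y_{j-1})$ and $F$. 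The ``moreover'' clause is inherited from Lemma~\ref{lemma:fan} in every case.

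There is essentially no hard step: the lemma is a repackaging of Lemma~\ref{lemma:fan} plus the definition of Algorithm~\ref{alg:viz}. The only points that demand care are the two bookkeeping facts above — identifying the \textsf{MakeFan} index $j$ with the index defined by $y_j=x_1$, and observing that the truncation at Step~\ref{step:path_def} is active only when $\length(P)=\ell$, so that whenever $\length(P)<\ell$ the returned path $P$ is genuinely the maximal $\alpha\beta$-path to which Lemma~\ref{lemma:fan} refers.
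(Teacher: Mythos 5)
Your proposal is correct and takes essentially the same approach as the paper, which disposes of this lemma in one line (``The proof follows by construction and by Lemma~\ref{lemma:fan}''); you have simply spelled out the case analysis and the two bookkeeping facts (that the \textsf{MakeFan} index $j$ satisfies $y_j=x_1$ via $\phi(xy_j)=\alpha$, and that the truncation at Step~\ref{step:path_def} is vacuous when $\length(P)<\ell$, so $P$ is then the maximal path to which Lemma~\ref{lemma:fan} refers) that the paper leaves implicit.
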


\begin{proof}
    The proof follows by construction and by Lemma~\ref{lemma:fan}.
\end{proof}

Finally, let us show that the first stage computes a proper partial $(1+\epsilon/2)\Delta$-edge-coloring, completing the proof of correctness.

\begin{lemma}\label{lemma:stage1}
    Consider running Algorithm~\ref{alg:color} on input $(G, \kappa, \ell)$.
    Suppose the algorithm does not fail in the \textsf{\upshape{while}} loop and let $\phi$ be the resulting coloring at the end of the loop.
    Then, $\phi$ is a proper partial $(1+\epsilon/2)\Delta$-edge-coloring of $G$.
\end{lemma}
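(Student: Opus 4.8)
The plan is to prove the statement by induction on the iterations of the \textsf{while} loop, maintaining the invariant that the current coloring $\phi$ is a proper partial $(1+\epsilon/2)\Delta$-edge-coloring and that each non-failing iteration deletes exactly one edge from $U$. The base case is trivial, and since every color ever assigned lies in $C \subseteq [(1+\epsilon/2)\Delta]$ — the shifting and flipping operations only rearrange colors already present, and fresh assignments always use $\alpha \in C$ — the coloring stays valued in $[(1+\epsilon/2)\Delta]\cup\{\blank\}$ throughout, so I only need to track properness.

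For the inductive step I would fix an iteration. Because the algorithm does not fail in the loop, the call to Algorithm~\ref{alg:viz} on $(\phi,e,x,C,\ell)$ returns some $((F,P),\alpha)$ with $F = (x,y_0=y,\ldots,y_{k-1})$ and $P = (x_0=x,\ldots,x_s)$, so Lemma~\ref{lemma:viz} applies. When $s=0$ (i.e.\ $P=(x)$) or $0 < s < \ell$ — cases~(1),~(2),~(3) of Lemma~\ref{lemma:viz} — the algorithm performs exactly the fan shift, and when $s>0$ the path flip, prescribed by those cases; these preserve properness by Facts~\ref{fact:fan} and~\ref{fact:path}, and Lemma~\ref{lemma:viz} then says the resulting edge ($xy_{k-1}$ if $\vend(P)=y_{j-1}$, else $xy_{j-1}$) is happy and may take color $\alpha$, so the coloring remains proper. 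In all three subcases the fan shift colors $e=xy_0$ while leaving no previously colored edge uncolored, so $\dom(\phi)$ gains precisely $e$, which is then removed from $U$.

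The bulk of the argument lies in the case $s=\ell$, for which Lemma~\ref{lemma:viz} gives no conclusion. Here the algorithm first uncolors $x_{\ell'-1}x_{\ell'}$ for a uniformly random $\ell' \in[\ell]$, yielding a coloring $\phi^\flat$ that is still proper. The crux is that under $\phi^\flat$ the truncation $P' = (x_0,\ldots,x_{\ell'-1})$ is now a \emph{maximal} $\alpha\beta$-path: the edge just removed is precisely the one that continued $P$ past $x_{\ell'-1}$, so whichever of $\alpha,\beta$ would extend $P'$ at $x_{\ell'-1}$ is missing there, while at $x = x_0$ extension remains blocked by $\beta\in M(\phi,x)=M(\phi^\flat,x)$ (for $\ell'\ge 2$; for $\ell'=1$ the flagged edge $xy_j$ sits inside the fan, $P'$ collapses to $(x)$, and I would instead work with $F'=(x,y_0,\ldots,y_{j-1})$ and check directly that $\alpha$ has been freed at $x$). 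Since missing sets only grow from $\phi$ to $\phi^\flat$, the fan $F$ (or $F'$) is still a fan under $\phi^\flat$ with $\alpha\in M(\phi^\flat,y_{j-1})\cap M(\phi^\flat,y_{k-1})$. Given all this, the case split in Algorithm~\ref{alg:color} ($\vend(P')=y_{j-1}$ versus $\vend(P')\ne y_{j-1}$) mirrors the split in the proof of Lemma~\ref{lemma:fan}, and re-running that argument with $(\phi,P)$ replaced by $(\phi^\flat,P')$ shows that after flipping $P'$ and shifting the appropriate fan, the edge $xy_{j-1}$ (resp.\ $xy_{k-1}$) is happy and may be colored $\alpha$. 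As before, $e$ becomes colored while exactly one other edge, the flagged one, is left uncolored, so the invariant survives; it then propagates to the end of the loop.

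The hard part will be the bookkeeping in the $s=\ell$ case: verifying that $P'$ is genuinely maximal under $\phi^\flat$, handling the degenerate sub-case $\ell'=1$, and confirming that the triple $(F,\alpha,j)$ produced by Algorithm~\ref{alg:fan} on $\phi$ — which need not be what Algorithm~\ref{alg:fan} would produce on $\phi^\flat$ — still satisfies the hypotheses that the proof of Lemma~\ref{lemma:fan} actually relies on.
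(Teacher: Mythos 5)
Your proposal is correct and follows essentially the same route as the paper's (much terser) proof: induct over iterations, dispatch the $\length(P) < \ell$ cases via Lemma~\ref{lemma:viz}, and for the $\length(P) = \ell$ case observe that uncoloring $x_{\ell'-1}x_{\ell'}$ makes $P'$ a maximal $\alpha\beta$-path, after which the argument of Lemma~\ref{lemma:fan} applies verbatim. The extra bookkeeping you flag (maximality of $P'$, the $\ell'=1$ degeneracy, and stability of the fan data under $\phi \to \phi^\flat$) all check out; the paper simply elides these verifications with the phrase ``an identical argument as in the proof of Lemma~\ref{lemma:fan}.''
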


\begin{proof}
    It is enough to show that the partial coloring at the end of each iteration is proper.
    As a result of Lemma~\ref{lemma:viz}, the only case to consider is when $\length(P) = \ell$.
    Note that by uncoloring the edge $x_{\ell'-1}x_{\ell'}$ at Step~\ref{step:uncolor_edge}, the path $P'$ is a maximal $\alpha\beta$-path.
    The claim now follows by an identical argument as in the proof of Lemma~\ref{lemma:fan}.
\end{proof}

\subsection{Probabilistic Analysis}\label{sec:proof}

Let us fix $\kappa = \Theta(\log n / \epsilon)$ and $\ell = \Theta(\kappa^2\log n/\epsilon)$, where the hidden constant factors are at least $50$.
In particular, as a result of \eqref{eqn:runtime}, the runtime of Algorithm~\ref{alg:color} is $O(m\log^3n/\epsilon^3)$ as claimed.
All that remains is to bound the probability of failure.
To this end, we make the following definitions for each of the $m$ iterations of the \textsf{while} loop in Algorithm~\ref{alg:color}:
\begin{align*}
    \phi_i &\defeq \text{ the coloring at the start of the $i$-th iteration}, \\
    e_i = x_iy_i &\defeq \text{ the random edge and vertex picked at Step~\ref{step:random_edge}}, \\
    C_i &\defeq \text{ the random subset of colors sampled at Step~\ref{step:random_sample}}, \\
    \ell_i &\defeq \text{ the random choice made at Step~\ref{step:random_choice}}, \\
    f_i &\defeq \text{ the edge uncolored at Step~\ref{step:uncolor_edge}}.
\end{align*}
Furthermore, let $G^*$ be the subgraph at Step~\ref{step:define_G_star} in Algorithm~\ref{alg:color}.

The algorithm fails if we either reach Step~\ref{step:fail_stage1} or Step~\ref{step:fail_stage2}.
Let $\mathcal{F}_1$ and $\mathcal{F}_2$ denote these events.
Note that we reach Step~\ref{step:fail_stage1} at iteration $i$ if we either reach Step~\ref{step:fail_fan} at some iteration of the \textsf{while} loop in Algorithm~\ref{alg:fan} or we reach Step~\ref{step:fail_x_alpha} in Algorithm~\ref{alg:viz}.
In particular, if $\mathcal{F}_1$ occurs, there is some iteration $i$ and vertex $v$ such that $M(\phi_i, v) \cap C_i = \0$.
Let $\tilde{\mathcal{F}}_1$ denote this event.
We have
\[\Pr[\text{Failure}] \,=\, \Pr[\mathcal{F}_1 \cup \mathcal{F}_2] \,\leq\, \Pr[\tilde{\mathcal{F}}_1 \cup \mathcal{F}_2] \,\leq\, \Pr[\tilde{\mathcal{F}}_1] + \Pr[\mathcal{F}_2].\]

The following lemma will assist with bounding $\Pr[\tilde{\mathcal{F}}_1]$.

\begin{lemma}\label{lemma:prob_C_i}
    $\Pr[\exists v \in V,\,M(\phi_i, v) \cap C_i = \0 \mid \phi_i] \leq 1/\poly(n)$.
\end{lemma}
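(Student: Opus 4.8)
The plan is to condition on $\phi_i$, so that the only randomness left is the sampling of $C_i$ at Step~\ref{step:random_sample}, bound the failure probability for a single fixed vertex, and then take a union bound over $V$.

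First I would record the key structural fact: Stage~1 operates with the palette $[(1+\epsilon/2)\Delta]$, so for every proper partial $(1+\epsilon/2)\Delta$-edge-coloring $\phi$ and every vertex $v$ we have $|M(\phi, v)| \geq (1+\epsilon/2)\Delta - \deg_G(v) \geq (\epsilon/2)\Delta$. Fix $v \in V$. Since $C_i$ is obtained by drawing $\kappa$ independent samples uniformly from $[(1+\epsilon/2)\Delta]$, each individual draw lands in $M(\phi_i, v)$ with probability at least
\[\frac{(\epsilon/2)\Delta}{(1+\epsilon/2)\Delta} \;=\; \frac{\epsilon/2}{1+\epsilon/2} \;\geq\; \frac{\epsilon}{3},\]
where the last inequality uses $\epsilon < 1$ (equivalently $1 + \epsilon/2 \leq 3/2$). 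By independence of the $\kappa$ draws,
\[\Pr[M(\phi_i, v) \cap C_i = \0 \mid \phi_i] \;\leq\; \left(1 - \frac{\epsilon}{3}\right)^{\kappa} \;\leq\; \exp\!\left(-\frac{\epsilon\kappa}{3}\right).\]

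Next I would substitute $\kappa = \Theta(\log n/\epsilon)$ with hidden constant at least $50$, so that $\epsilon\kappa \geq 50\log n$ and hence $\exp(-\epsilon\kappa/3) \leq n^{-50/3}$. A union bound over the at most $n$ vertices of $G$ then yields
\[\Pr[\exists v \in V,\ M(\phi_i, v) \cap C_i = \0 \mid \phi_i] \;\leq\; n \cdot n^{-50/3} \;=\; n^{1-50/3} \;\leq\; \frac{1}{\poly(n)},\]
which is the claimed bound.

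I do not expect any real obstacle in this lemma; it is a short second-moment-free calculation. The one point worth stating carefully is that $C_i$ is sampled afresh at Step~\ref{step:random_sample} and is independent of the history that produced $\phi_i$, so after conditioning on $\phi_i$ the $\kappa$ draws are genuinely i.i.d.\ uniform on $[(1+\epsilon/2)\Delta]$; this is exactly what justifies the product bound $(1-\epsilon/3)^{\kappa}$ and hence the whole argument.
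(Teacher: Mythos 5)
Your proof is correct and follows the same approach as the paper: condition on $\phi_i$, use $|M(\phi_i,v)|\geq \epsilon\Delta/2$ together with the independence of the $\kappa$ replacement draws to bound the per-vertex failure probability by $\exp(-\Theta(\epsilon\kappa))$, and then union bound over $V$. The only cosmetic difference is that you lower bound $\frac{\epsilon/2}{1+\epsilon/2}$ by $\epsilon/3$ before exponentiating, whereas the paper keeps the exact fraction $\frac{\epsilon}{2+\epsilon}$ in the exponent; both give $1/\poly(n)$.
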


\begin{proof}
    Consider the $i$-th iteration and fix a vertex $v$.
    As we sample with replacement while defining $C_i$, we have:
    \[\Pr[M(\phi_i, v) \cap C_i = \0 \mid \phi_i] = \left(1 - \frac{|M(\phi_i, v)|}{(1+\epsilon/2)\Delta}\right)^\kappa \leq \exp\left(- \frac{|M(\phi_i, v)|\kappa}{(1+\epsilon/2)\Delta}\right).\]
    Note that $|M(\phi_i, v)| \geq \epsilon\Delta/2$ for each vertex $v$.
    Therefore,
    \[\Pr[M(\phi_i, v) \cap C_i = \0 \mid \phi_i] \,\leq\, \exp\left(- \frac{\eps\kappa}{2+\epsilon}\right) \,=\, \frac{1}{\poly(n)},\]
    where the last step follows by definition of $\kappa$ and since $\eps < 1$.
    The claim now follows by a union bound over $V$.
\end{proof}

As the bound in Lemma~\ref{lemma:prob_C_i} is independent of $i$, we have
\begin{align*}
    \Pr[\exists v \in V,\,M(\phi_i, v) \cap C_i = \0] &= \sum_{\phi}\Pr[\phi_i = \phi]\Pr[\exists v \in V,\,M(\phi_i, v) \cap C_i = \0\mid \phi_i = \phi] \\
    &\leq \frac{1}{\poly(n)},
\end{align*}
where the sum is taken over all possible such colorings $\phi$.
Taking a union bound over all iterations, we have
\begin{align}\label{eqn:prob_stage1}
    \Pr[\tilde{\mathcal{F}}_1] \,\leq\, \frac{m}{\poly(n)} \,=\, \frac{1}{\poly(n)}.
\end{align}

Let us now consider $\Pr[\mathcal{F}_2]$.
We will define the following random variables for each vertex $v \in V(G)$:
\[d_i(v) \defeq \bbone{v \in f_i}, \quad d(v) \defeq \deg_{G^*}(v) = \sum_{i = 1}^md_i(v).\]
In the following lemma, we shall bound $\Pr[d_i(v) = 1\mid \phi_i,\, C_i]$.

\begin{lemma}\label{lemma:main_lemma}
    $\Pr[d_i(v) = 1\mid \phi_i,\, C_i] \leq \dfrac{\eps\Delta}{50(m - i + 1)\log n}$.
\end{lemma}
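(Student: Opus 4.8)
The quantity $d_i(v)=\mathbbm 1\{v\in f_i\}$ is $1$ exactly when, at iteration $i$, the Vizing chain $(F,P)$ has $\length(P)=\ell$ and the randomly chosen edge $f_i=x_{\ell'-1}x_{\ell'}$ of $P$ is incident to $v$. So I would condition on $\phi_i$ and $C_i$ (which is exactly the conditioning in the statement) and then take expectation over the three remaining sources of randomness in the $i$-th iteration: the uniformly random edge--vertex pair $(e_i,x_i)$ picked at Step~\ref{step:random_edge}, and the uniform index $\ell_i=\ell'\in[\ell]$ picked at Step~\ref{step:random_choice}. The idea is a union bound over ``which alternating path $P$ passes through $v$'': first bound the number of candidate paths, then for each path bound the number of edges $e$ that could have produced it, and finally note that $f_i$ is uniform among the first $\ell$ edges of $P$, so the probability that it lands on one of the (at most two) edges of $P$ incident to $v$ is at most $2/\ell$.

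\textbf{Key steps.} First, fix $\phi_i$ and $C_i$. For $v$ to be incident to $f_i$ we need $v$ to lie on $P$, where $P$ is a maximal $\alpha\beta$-alternating path for some $\alpha,\beta\in C_i$, and $f_i$ to be one of the (at most two) edges of $P$ meeting $v$. Second, observe that once $\phi_i$ and the two colors $\alpha,\beta$ are fixed, a maximal $\alpha\beta$-path through $v$ is uniquely determined (a given vertex lies on a unique maximal $\alpha\beta$-path under a proper coloring); since $\alpha,\beta$ range over $C_i$ with $|C_i|\le\kappa$, there are at most $\binom{\kappa}{2}=O(\kappa^2)$ such paths. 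Third, for each such path $P$, the uncolored edge $e$ chosen at Step~\ref{step:random_edge} must be incident to an endpoint of $P$ (that is how $P$ arises in Algorithm~\ref{alg:viz}: $x$ is an endpoint of $P$, or in Case~2 the far end is $y_{j-1}$ — in all cases $e$ touches an endpoint of $P$), so there are at most $2\Delta$ choices of $e$ giving rise to this $P$. Hence the number of pairs $(e,P)$ with $v$ on $P$ is $O(\kappa^2\Delta)$. Fourth, $e$ is chosen uniformly from $U$, and $|U|=m-i+1$ at iteration $i$ (one edge is removed from $U$ each iteration, and although one edge is also flagged, the flagged edge is uncolored — the net bookkeeping gives $|U|=m-i+1$; I'd quote this as in the overview). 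Fifth, conditioned on a particular $(e,P)$ being realized (so that $\length(P)=\ell$), $\ell'$ is uniform in $[\ell]$, so $\Pr[f_i\ni v\mid e,P]\le 2/\ell$. Combining:
\[
\Pr[d_i(v)=1\mid\phi_i,C_i]\;\le\;\frac{O(\kappa^2\Delta)}{m-i+1}\cdot\frac{2}{\ell}\;=\;O\!\left(\frac{\kappa^2\Delta}{(m-i+1)\,\ell}\right).
\]
Finally, plug in $\ell=\Theta(\kappa^2\log n/\eps)$ with the hidden constant at least $50$; this kills the $\kappa^2$ and leaves $\le \dfrac{\eps\Delta}{50(m-i+1)\log n}$, which is exactly the claimed bound (choosing the constant in $\ell$ large enough to absorb the $O(\cdot)$ and the factor $2$).

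\textbf{Main obstacle.} The routine part is the counting; the delicate part is making the union bound rigorous, i.e. arguing cleanly that the event $\{v\in f_i\}$, \emph{after conditioning on $\phi_i$ and $C_i$}, is contained in a union over $O(\kappa^2)$ paths $P$ and, for each, over $O(\Delta)$ edges $e$, of events each of probability $\le \frac{1}{m-i+1}\cdot\frac{2}{\ell}$. One has to be careful that the bound ``$\le 2\Delta$ choices of $e$ per path'' really accounts for all of Cases~1 and~2 of the algorithm (the far endpoint of $P'$ being $y_{j-1}$ versus not), and that it does not matter which endpoint of $P$ plays the role of the pivot $x$ — either way $e$ is incident to an endpoint of the maximal $\alpha\beta$-path, so $2\cdot\Delta$ suffices (a constant factor here is harmless since we have slack in the constant defining $\ell$). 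I would also flag explicitly that $|C_i|\le\kappa$ rather than $=\kappa$ because $C_i$ is sampled with replacement, so $\binom{|C_i|}{2}\le\binom{\kappa}{2}$ still holds. None of this is hard, but it is where the proof must be written with care.
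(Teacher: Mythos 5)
Your proposal is correct and takes essentially the same approach as the paper: condition on $\phi_i,C_i$, union-bound over the at most $\binom{|C_i|}{2}$ maximal alternating paths through $v$ with colors in $C_i$ (the paper organizes the sum over their $\leq 2\binom{|C_i|}{2}\leq\kappa^2$ endpoints $x$ rather than over the paths themselves, which is cosmetically equivalent), bound the number of edges $e_i$ that can realize each by $O(\Delta)$, and use that $\ell_i$ is uniform over $[\ell]$ while at most $2$ edges of $P$ touch $v$, giving $\frac{2}{\ell}$. The one small bookkeeping point you leave loose: the paper's constants come out exactly right because $\Pr[x_i=x]\leq\frac{\Delta}{2(m-i+1)}$ — the $\tfrac12$ from the uniform choice of $x_i\in e_i$ cancels the $2$ in $\frac{2}{\ell}$ — whereas your count drops that $\tfrac12$ and so lands a factor of $2$ high; you correctly note this can be absorbed by the constant defining $\ell$, but in the paper no extra slack is actually needed.
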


\begin{proof}
    For $v$ to be in $f_i$, there must be an $\alpha\beta$-path $P$ passing through $v$ for some $\alpha, \beta \in C_i$ such that one of the endpoints of $P$ is $x_i$.
    Let $X$ be the set of all vertices that are endpoints of some such alternating path $P$.
    We have
    \[|X| \leq 2\,\binom{|C_i|}{2} \leq \kappa^2.\]
    For a fixed $x_i$, the vertex $y_i$ must be adjacent to $x_i$ and so there are at most $\Delta$ such vertices.
    Note that the random choices made are $e_i$, $x_i$, and $\ell_i$ (as we are conditioning on the outcome $C_i$).
    As there are at most two edges incident to $v$ in any alternating path $P$ passing through $v$, we have
    \begin{align*}
        \Pr[d_i(v) = 1\mid \phi_i,\, C_i] &\leq \sum_{x \in X}\Pr[x_i = x, \, v\in f_i \mid \phi_i,\, C_i] \\
        &\leq \sum_{x \in X}\frac{\Delta}{2(m - i + 1)}\,\Pr[v \in f_i \mid \phi_i,\, C_i,\, x_i = x] \\
        &\leq \sum_{x \in X}\frac{\Delta}{(m - i + 1)\ell} \\
        &= \frac{\Delta|X|}{\ell\,(m - i + 1)},
    \end{align*}
    where we use the fact that the size of the set $U$ during the $i$-th iteration is $m - i + 1$.
    The claim now follows since $|X| \leq \kappa^2$ and by the definition of $\ell$.
\end{proof}

Note that $d_i(v)$ is independent of all $d_j(v)$ for $j < i$, given $\phi_i$ and $C_i$.
In particular, as a result of Lemma~\ref{lemma:main_lemma} we have
\begin{align*}
    \E[d_i(v) \mid d_1(v), \ldots, d_{i - 1}(v)] &= \E[\E[d_i(v) \mid d_1(v), \ldots, d_{i - 1}(v), \phi_i,\,C_i]] \\
    &= \E[\E[d_i(v) \mid \phi_i,\,C_i]] \\
    &= \E[\Pr[d_i(v) = 1 \mid \phi_i,\,C_i]] \\
    &\leq \frac{\eps\Delta}{50(m - i + 1)\log n}.
\end{align*}

To finish the analysis of $d(v)$, we shall apply the following concentration inequality due to Kuszmaul and Qi \cite{azuma}, which is a special case of their version of multiplicative Azuma's inequality for supermartingales:

\begin{theorem}[{Kuszmaul--Qi \cite[Corollary 6]{azuma}}]\label{theo:azuma_supermartingale}
    Let $c > 0$ and let $X_1$, \ldots, $X_n$ be %real valued
    random variables taking values in $[0,c]$.
    Suppose that $\E[X_i|X_1, \ldots, X_{i-1}] \leq a_i$ for all $i$.
    Let $\mu \defeq \sum_{i = 1}^na_i$. Then, for any $\delta > 0$,
    \[\Pr\left[\sum_{i = 1}^nX_i \geq (1+\delta)\mu\right] \,\leq\, \exp\left(-\frac{\delta^2\mu}{(2+\delta)c}\right).\]
\end{theorem}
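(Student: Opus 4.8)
The plan is to prove this by the standard exponential-moment (Chernoff--Bernstein) method, arranged so that it only ever uses the one-sided control $\E[X_i\mid X_1,\dots,X_{i-1}]\le a_i$ rather than a martingale difference structure. Fix $t>0$ and write $S_k\defeq\sum_{i=1}^kX_i$. Since $e^{tS_n}$ is nonnegative, Markov's inequality gives
\[
\Pr\Big[\sum_{i=1}^nX_i\ge(1+\delta)\mu\Big]\;\le\;e^{-t(1+\delta)\mu}\,\E\big[e^{tS_n}\big],
\]
so the whole task reduces to bounding the moment generating function $\E[e^{tS_n}]$ and then choosing $t$ well.

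First I would bound the MGF by peeling off one variable at a time. Conditioning on $X_1,\dots,X_{n-1}$, we have $e^{tS_n}=e^{tS_{n-1}}e^{tX_n}$ with $e^{tS_{n-1}}$ measurable with respect to the conditioning. Since $X_n\in[0,c]$ and $x\mapsto e^{tx}$ is convex, the chord bound gives $e^{tX_n}\le 1+\tfrac{X_n}{c}(e^{tc}-1)$; crucially, because $t>0$ the coefficient $e^{tc}-1$ is nonnegative, so the right-hand side is nondecreasing in $X_n$, and taking conditional expectation together with $\E[X_n\mid\cdot]\le a_n$ yields
\[
\E\big[e^{tX_n}\mid X_1,\dots,X_{n-1}\big]\;\le\;1+\frac{a_n}{c}(e^{tc}-1)\;\le\;\exp\!\Big(\frac{a_n}{c}(e^{tc}-1)\Big),
\]
using $1+u\le e^u$. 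Since the right-hand side is a constant, iterating via the tower property gives $\E[e^{tS_n}]\le\exp\!\big(\tfrac{\mu}{c}(e^{tc}-1)\big)$.

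Next I would optimize over $t$. Combining the two displays, $\Pr[S_n\ge(1+\delta)\mu]\le\exp\!\big(\tfrac{\mu}{c}(e^{tc}-1-(1+\delta)tc)\big)$; substituting $s=tc$ and minimizing $s\mapsto e^{s}-1-(1+\delta)s$ at $s=\ln(1+\delta)$ produces the familiar Poisson-type tail bound $\exp\!\big(\tfrac{\mu}{c}(\delta-(1+\delta)\ln(1+\delta))\big)$. It then remains only to weaken this to the stated form, i.e.\ to verify the elementary inequality
\[
(1+\delta)\ln(1+\delta)-\delta\;\ge\;\frac{\delta^2}{2+\delta}\qquad\text{for all }\delta\ge 0.
\]
I would do this by setting $h(\delta)$ equal to the difference of the two sides, noting $h(0)=0$, and checking $h'(\delta)=\ln(1+\delta)-\tfrac{\delta(\delta+4)}{(2+\delta)^2}\ge 0$ — for instance by combining the standard estimate $\ln(1+\delta)\ge\tfrac{2\delta}{2+\delta}$ with the algebraic fact $\tfrac{2\delta}{2+\delta}-\tfrac{\delta(\delta+4)}{(2+\delta)^2}=\tfrac{\delta^2}{(2+\delta)^2}\ge 0$. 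The hard part here is not conceptual but bookkeeping: keeping the directions of all the one-sided inequalities consistent in the MGF induction (this is where $t>0$ and $X_i\ge 0$ both get used), and the short one-variable calculus check of the final numeric inequality; everything else is routine.
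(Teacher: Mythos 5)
Your proof is correct and complete. A minor but important contextual note: the paper does not actually contain a proof of this theorem --- it is stated as a cited black box (Kuszmaul--Qi, Corollary 6), so there is no ``paper proof'' to compare against. What you have given is exactly the standard exponential-moment argument that Kuszmaul and Qi themselves use: Markov on $e^{tS_n}$, the convexity chord bound $e^{tx}\le 1+\tfrac{x}{c}(e^{tc}-1)$ valid for $x\in[0,c]$ (using $X_i\ge 0$ and $t>0$ so the chord is monotone in $x$ and the one-sided conditional mean bound can be inserted), a tower-property induction to get $\E[e^{tS_n}]\le\exp\!\big(\tfrac{\mu}{c}(e^{tc}-1)\big)$, optimization at $tc=\ln(1+\delta)$ to get the Poisson-type tail $\exp\!\big(\tfrac{\mu}{c}(\delta-(1+\delta)\ln(1+\delta))\big)$, and then the elementary bound
\[
(1+\delta)\ln(1+\delta)-\delta\;\ge\;\frac{\delta^2}{2+\delta},
\]
which you verify correctly via $h(0)=0$, $h'(\delta)=\ln(1+\delta)-\tfrac{\delta(\delta+4)}{(2+\delta)^2}\ge 0$, using $\ln(1+\delta)\ge\tfrac{2\delta}{2+\delta}$ and the algebraic identity $\tfrac{2\delta}{2+\delta}-\tfrac{\delta(\delta+4)}{(2+\delta)^2}=\tfrac{\delta^2}{(2+\delta)^2}$. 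All the one-sided inequalities point the right way, and the places where $t>0$ and $X_i\ge 0$ are actually needed are correctly flagged. There is no gap; the argument is a clean, self-contained proof of the cited result.
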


From our earlier computation, we may apply Theorem~\ref{theo:azuma_supermartingale} with 
\[X_i = d_i(v), \quad c = 1, \quad \text{and} \quad a_i = \frac{\eps\Delta}{50(m - i + 1)\log n}.\]
Assuming $G$ has no isolated vertices, we must have $n/2 \leq m \leq n^2$.
With this in hand, from the bounds on partial sums of the harmonic series \ep{see, e.g., \cite[\S2.4]{bressoud2022radical}}, we have
\[\frac{\eps\Delta}{50} \,\leq\, \mu \,\leq\, \frac{\eps\Delta}{12}.\]
Setting $\delta = 1$, we conclude
\[\Pr\left[d(v) \geq \frac{\eps\Delta}{6}\right] \,\leq\, \exp\left(-\frac{\eps\Delta}{150}\right) \,=\, \frac{1}{\poly(n)},\]
where the last step follows by the lower bound on $\Delta$ stated in Theorem~\ref{theo:main_theo}.
By taking a union bound over $V(G)$, we have
\[\Pr[\mathcal{F}_2] \leq \frac{1}{\poly(n)}.\]
Putting this together with \eqref{eqn:prob_stage1}, we have
\[\Pr[\text{Failure}] \leq \frac{1}{\poly(n)},\]
as desired.

\sloppy
% \vspace{5pt}
\printbibliography
% \vspace{10pt}

% \newpage
\appendix

\section{Proof of Proposition~\ref{prop:greedy}}\label{appendix}

Let us first restate the proposition.

\begin{proposition*}[{Restatement of Proposition~\ref{prop:greedy}}]
    Let $\eps > 0$ be arbitrary and let $\gamma \defeq \min\set{1,\, \eps}$.
    For any $n$-vertex graph $G$ with $m$ edges and maximum degree $\Delta$,
    Algorithm~\ref{alg:greedy} computes a proper $(2+\eps)\Delta$-edge-coloring of $G$ in $O(\max\set{m/\gamma^2,\,\log^2n/\gamma})$ time with probability at least $1 - \min\{e^{-m/\gamma},\\ 1/\poly(n)\}$.
\end{proposition*}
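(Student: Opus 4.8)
The plan is to verify correctness by inspection and then bound the total number of random color trials by coupling the execution with a sum of i.i.d.\ geometric variables. Correctness is immediate: the algorithm only ever sets $\phi(e)$ to a color valid for $e$, so the output is always a proper coloring, and the loop always terminates because when an edge $e = xy$ is processed the colors invalid for $e$ are precisely those appearing on the at most $(\deg_G(x)-1) + (\deg_G(y)-1) \le 2\Delta - 2$ colored edges meeting $x$ or $y$, leaving at least $(2+\eps)\Delta - (2\Delta - 2) = \eps\Delta + 2 \ge 1$ valid colors at every trial. So the only thing to prove is the running-time estimate.

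The key quantitative point is that at every single trial, no matter the current partial coloring, the chosen color is valid with probability at least
\[\frac{\eps\Delta + 2}{(2+\eps)\Delta} \;\ge\; \frac{\eps}{2+\eps} \;\ge\; \frac{\gamma}{3}\]
(the last inequality because $2+\eps \le 3$ when $\eps \le 1$, and $\eps/(2+\eps) \ge 1/3$ when $\eps \ge 1$). Set $p \defeq \gamma/3$, let $T_j$ be the number of trials spent on the $j$-th edge, and let $T \defeq \sum_{j=1}^{m} T_j$ be the total number of trials. Conditioned on the whole history up to the start of the $j$-th edge, $T_j$ is geometric with success probability at least $p$, hence stochastically dominated by a $\mathrm{Geom}(p)$ variable; summing these conditional dominations (equivalently, coupling the $t$-th trial with the $t$-th term of an i.i.d.\ $\BER(p)$ sequence $Z_1, Z_2, \dots$ so that the trial succeeds whenever $Z_t = 1$) gives, for every integer $N$,
\[\Pr[T > N] \;\le\; \Pr\!\left[\sum_{t=1}^{N} Z_t < m\right] \;=\; \Pr[\BIN(N,p) < m].\]

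To finish, I would take $N \defeq \big\lceil \max\{\,24 m/\gamma^2,\; 24 C \ln n/\gamma\,\}\big\rceil$ for a large enough absolute constant $C$, so that $\mu \defeq N p = N\gamma/3 \ge \max\{\,8m/\gamma,\; 8C\ln n\,\} \ge 2m$; the multiplicative Chernoff bound then yields
\[\Pr[\BIN(N,p) < m] \;\le\; \Pr[\BIN(N,p) \le \mu/2] \;\le\; e^{-\mu/8} \;\le\; \min\{\,e^{-m/\gamma},\; 1/\poly(n)\,\}.\]
Since each trial and each edge-coloring update costs $O(1)$ with the data structures of \S\ref{subsec:data_structures} (we may assume $G$ has no isolated vertices, so initialization is $O(m)$ and $n = O(m)$), the total runtime is $O(m + T)$, which on the event $\{T \le N\}$ is $O(\max\{m, m/\gamma^2, \log^2 n/\gamma\}) = O(\max\{m/\gamma^2, \log^2 n/\gamma\})$ because $\gamma \le 1$. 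The only genuinely delicate points are making the conditional stochastic-domination/coupling step rigorous and choosing the constants inside $N$ so that the single Chernoff estimate simultaneously beats both $e^{-m/\gamma}$ and $1/\poly(n)$; everything else is routine bookkeeping — which is why the paper calls it a standard coupling argument.
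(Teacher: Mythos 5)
Your proof is correct, but it takes a genuinely different route from the paper's. The paper splits into two cases: for $m = O(\log n)$ it applies a crude union bound over the $m$ individual geometric tails ($\Pr[T \geq t] \leq m(1-\gamma/3)^{t/m - 1}$), which is what forces the $\log^2 n / \gamma$ term in the runtime; for $m = \Omega(\log n)$ it invokes a black-box concentration inequality for sums of i.i.d.\ geometric random variables (Proposition~\ref{prop:geom}, quoted from \cite{bernshteyn2024linear}). You instead couple the entire trial sequence with a single i.i.d.\ Bernoulli sequence and reduce $\Pr[T > N]$ to a lower-tail estimate for $\BIN(N,p)$, which a single multiplicative Chernoff bound handles uniformly across both regimes. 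This is arguably cleaner: one estimate replaces a case split plus an imported lemma, and your argument in fact yields the slightly sharper bound $N = O(\max\{m/\gamma^2,\, \log n/\gamma\})$ — the paper's $\log^2 n/\gamma$ is an artifact of the union-bound case, not an intrinsic loss. The two proofs agree on the preliminary step (at every trial the success probability is at least $(\eps\Delta + 2)/((2+\eps)\Delta) \geq \gamma/3$), and your observation that conditionally on the history each $T_j$ is genuinely geometric — because the state does not change between trials for a fixed edge — makes the stochastic-domination/coupling step routine, exactly as you flagged.
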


Let $\phi_i$ denote the coloring at the start of the $i$-th iteration and let $T_i$ denote the number of random choices made during the $i$-th iteration.
Note that the runtime of Algorithm~\ref{alg:greedy} is precisely $O(T)$, where $T = \sum_{i = 1}^mT_i$.
Let $xy$ be the uncolored edge to be colored during the $i$-th iteration.
We have
\[|M(\phi_i, x)|, |M(\phi_i, y)| \geq (2+\eps)\Delta - (\Delta - 1) = 1 + (1+\eps)\Delta,\]
% Furthermore, as $|M(\phi_i, x)| \geq 1 + (1+\eps)\Delta$ for any vertex $x$, we may conclude the following for any edge $xy$:
which implies
\begin{align*}
    |M(\phi_i, x) \cap M(\phi_i, y)| &= |M(\phi_i, x)| + |M(\phi_i, y)| - |M(\phi_i, x) \cup M(\phi_i, y)| \\
    &\geq 2(1 + (1+\eps)\Delta) - (2+\eps)\Delta \\
    &\geq 2 + \eps\Delta.
\end{align*}
In particular, 
\[\Pr[T_i \geq t_i \mid \phi_i] \leq \left(1 - \frac{(2 + \eps\Delta)}{(2+\eps)\Delta}\right)^{t_i - 1} \leq \left(1 - \frac{\gamma}{3}\right)^{t_i - 1},\]
where $\gamma$ is as defined in the statement of Proposition~\ref{prop:greedy}.
First, let us consider the case that $m = O(\log n)$.
As the expression above is independent of $\phi_i$, we may conclude the following:
\begin{align*}
    \Pr[T \geq t] \,\leq\, \Pr[\exists i,\, T_i \geq t/m] \,\leq\, m\left(1 - \frac{\gamma}{3}\right)^{t/m - 1} \,\leq\, m\exp\left(-\frac{\gamma}{3}\left(\frac{t}{m} - 1\right)\right).
\end{align*}
For $t = \Theta(\log^2 n/\gamma)$, the above is at most $1/\poly(n)$, as desired.

Now, suppose $m = \Omega(\log n)$.
Let $X_1, \ldots, X_m$ be i.i.d. Geometric random variables with probability of success $\gamma/3$.
Clearly, $X = \sum_{i = 1}^mX_i$ stochastically dominates $T$.
The following result of \cite{bernshteyn2024linear} which follows as a corollary to \cite[Proposition 2.2]{assadi2024faster} will assist with our proof.

\begin{proposition}[{\cite[Proposition~7.3]{bernshteyn2024linear}}]\label{prop:geom}
    Let $X_1, \ldots, X_n$ be i.i.d geometric random variables with probability of success $p$.
    Then for any $\mu \geq n/p$,
    \[\Pr\left[\sum_{i = 1}^nX_i \,\geq\, \frac{6\mu}{p(1-p)}\right] \leq \exp\left(-\frac{\mu}{1-p}\right).\]
\end{proposition}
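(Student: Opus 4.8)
The plan is to bound the upper tail of $S \defeq \sum_{i=1}^n X_i$ directly by a Chernoff/moment-generating-function argument, using the hypothesis $\mu \geq n/p$ to swallow every constant. (Since the statement is quoted verbatim from prior work, one could alternatively just cite the derivation of \cite[Proposition~7.3]{bernshteyn2024linear} from \cite[Proposition~2.2]{assadi2024faster}; what follows is a self-contained route.) With the convention $\Pr[X_i = k] = p(1-p)^{k-1}$ for integers $k \geq 1$, each $X_i$ has moment generating function $\E[e^{tX_i}] = \frac{pe^t}{1-(1-p)e^t}$ on the range $(1-p)e^t < 1$, so by independence $\E[e^{tS}] = \left(\frac{pe^t}{1-(1-p)e^t}\right)^n$, and Markov's inequality gives, for any admissible $t \geq 0$ and with $a \defeq \frac{6\mu}{p(1-p)}$,
\[
\Pr[S \geq a] \;\leq\; e^{-ta}\left(\frac{pe^t}{1-(1-p)e^t}\right)^n.
\]

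The key step is the choice of $t$: I would take $e^t = \frac{2-p}{2(1-p)} = 1 + \frac{p}{2(1-p)}$, which sits ``halfway to the pole'' $e^t = \frac{1}{1-p}$ (so the bracket is finite precisely because $p > 0$, and $t \geq 0$ because $p \geq 0$); with this choice the denominator $1-(1-p)e^t$ equals $p/2$, and the bracket collapses to $\frac{2-p}{1-p}$. It then remains to verify two elementary one-line inequalities. Using $\ln(1+x) \geq \frac{x}{1+x}$ with $x = \frac{p}{2(1-p)}$ gives $t \geq \frac{p}{2-p}$, hence $ta \geq \frac{p}{2-p}\cdot\frac{6\mu}{p(1-p)} = \frac{6\mu}{(2-p)(1-p)} \geq \frac{3\mu}{1-p}$; and using $\ln(2-p) \leq \ln 2 < 1$ together with $\ln\frac{1}{1-p} \leq \frac{p}{1-p}$ gives $\ln\frac{2-p}{1-p} \leq \frac{1}{1-p}$, so that $n\ln\frac{2-p}{1-p} \leq \frac{n}{1-p} \leq \frac{\mu p}{1-p} \leq \frac{\mu}{1-p}$, where the middle inequality is exactly the hypothesis $n \leq \mu p$. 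Substituting into the displayed bound yields $\Pr[S\geq a] \leq \exp\!\left(\tfrac{\mu}{1-p} - \tfrac{3\mu}{1-p}\right) \leq \exp\!\left(-\tfrac{\mu}{1-p}\right)$, which is the claim (Markov applies at the real threshold $a$, so integrality of $\tfrac{6\mu}{p(1-p)}$ never enters).

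The only genuinely delicate point is the bookkeeping: one must make sure the crude logarithm estimates and the choice of $t$ remain valid \emph{uniformly} over all $p \in (0,1)$ and all admissible $\mu \geq n/p$, including the degenerate regimes $p \to 0$ and $p \to 1$ — this is exactly why the ``halfway to the pole'' value of $t$ is convenient, since it keeps $1-(1-p)e^t = p/2$ bounded away from zero relative to its natural scale regardless of $p$. A second, equivalent route that I would also mention bypasses the geometric MGF entirely: realize $S$ as the trial index of the $n$-th success in i.i.d.\ $p$-coin flips, use the identity $\{S \geq a\} = \{\,\mathrm{Bin}(\lceil a\rceil - 1,\, p) \leq n-1\,\}$, note that $\mathrm{Bin}(\lceil a\rceil-1,p)$ has mean $\approx \frac{6\mu}{1-p} \geq 6(n-1)$ by the hypothesis, and conclude via the multiplicative lower-tail Chernoff bound with deviation parameter $\delta = 1 - \frac{n-1}{\text{mean}} \geq 5/6$.
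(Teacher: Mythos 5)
Your Chernoff argument is correct, and I checked the algebra: with $e^t = \tfrac{2-p}{2(1-p)}$ one indeed gets $1-(1-p)e^t = p/2$ and bracket $\tfrac{2-p}{1-p}$; the inequality $t\geq \tfrac{p}{2-p}$ gives $ta \geq \tfrac{6\mu}{(2-p)(1-p)} \geq \tfrac{3\mu}{1-p}$; and $\ln\tfrac{2-p}{1-p} \leq \ln 2 + \tfrac{p}{1-p} \leq 1 + \tfrac{p}{1-p} = \tfrac{1}{1-p}$ together with $n \leq \mu p \leq \mu$ yields the second bound, so the exponent is at most $-\tfrac{2\mu}{1-p} \leq -\tfrac{\mu}{1-p}$ (you actually prove a factor-$2$ stronger bound, which is fine). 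The point worth flagging is that the paper itself offers no proof of this proposition at all — it is quoted as \cite[Proposition~7.3]{bernshteyn2024linear}, which that paper in turn deduces from \cite[Proposition~2.2]{assadi2024faster} — so there is nothing in the source to compare against. Your self-contained moment-generating-function derivation is a valid and clean replacement for the chain of citations; the one thing I would tighten is the phrase ``$t\geq 0$ because $p\geq 0$'' (the hypotheses implicitly require $p \in (0,1)$, since $\mu \geq n/p$ and the MGF is finite only there), but this does not affect correctness. The negative-binomial route you sketch at the end is also sound in outline, though as written it is only a sketch and would need the constant-tracking spelled out.
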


Applying Proposition~\ref{prop:geom} to $X_1$, \ldots, $X_m$ with $p = \gamma/3$ and $\mu = m/p$, we have
\[\Pr\left[X \,\geq\, \frac{90m}{\gamma^2}\right] \,\leq\, \Pr\left[X \,\geq\, \frac{6m}{p^2(1 - p)}\right] \,\leq\, \exp\left(-\frac{m}{p(1-p)}\right) \,\leq\, \exp\left(-\frac{m}{\gamma}\right),\]
as desired.

\end{document}